\newtheorem{lemma}{Lemma}
\newtheorem{theorem}{Theorem}
\newcommand{\arccot}{\mathrm{arccot}\,}
\begin{document}

\title{Robust Quantum Walk Search Without Knowing the Number of Marked Vertices}

\author{Yongzhen Xu}
\author{Delong Zhang}
\author{Lvzhou Li}\thanks{Corresponding author.\\ lilvzh@mail.sysu.edu.cn}
	\affiliation{Institute of Quantum Computing and Computer  Theory, School of Computer Science and
Engineering, Sun Yat-sen University, Guangzhou 510006, China}

\begin{abstract}
 There has been a very large body of research  on  searching a marked vertex on a  graph based on quantum walks, and Grover's algorithm can be regarded as a quantum walk-based search algorithm on a special graph.  However,   the existing quantum walk-based search algorithms suffer severely from the souffl\'{e}  problem which mainly  means that the success probability of finding a marked vertex could shrink dramatically even to zero when the number of search steps is greater than the right one, thus heavily reducing the robustness and practicability of the algorithm. Surprisingly, while the souffl\'{e}  problem of Grover's algorithm has attracted enough attention, how to address this problem for general quantum walk-based search algorithms is missing in the literature.  Here we initiate the study of  overcoming the  souffl\'{e}  problem for quantum walk-based search algorithms by presenting a new     quantum walk-based search framework that achieves  robustness without sacrificing the quantum speedup. In this framework,  for any adjustable parameter $\epsilon$, the quantum algorithm  can find a marked vertex on an $N$-vertex {\it complete bipartite graph} with probability at least $ 1-\epsilon$, whenever the number of search steps  $h$ satisfies  $h \geq \ln(\frac{2}{\sqrt{\epsilon}})\sqrt{N} + 1$. Note that the algorithm need not  know  the exact number of marked vertices.  Consequently, we obtain quantum search algorithms with stronger robustness and practicability.
\end{abstract}

\maketitle
\section{Introduction}
Quantum walks, the analogues of classical random walks in the quantum realm, were first introduced by Aharonov, Davidovich and Zagury~\cite{AharonovDZ93PhysRevA} in 1993.
In the last nearly thirty years, much progress about quantum walks has been made from theory to experiments.
Quantum walks have become  a basic tool for designing quantum algorithms to settle a series of problems such as element distinctness~\cite{Ambainis07}, triangle finding~\cite{MagniezSS07}, quantum backtracking~\cite{Montanaro18},  and so on~\cite{BuhrmanS06, MagniezN07,ChildsSV07, JefferyKM13,BelovsCJKM13}. Furthermore, they are a universal model of quantum computation~\cite{Childs09PRL,LovettCETK10PRA}.
In the aspect of experiment study, various hardware platforms have been used to demonstrate  results of quantum walks, e.g.,~\cite{PhysRevLett2010,science20101193515,PhysRevLett2018,science2021}. There are two types of quantum walks:  discrete-time quantum walks~\cite{Meyer93,MEYER1996337,Watrous01,AmbainisBNVW01,AmbainisKV01} and continuous-time quantum walks~\cite{FarhiG98PRA,ChildsFG02}. In this paper, we are concerned with the discrete-time model.

A central topic in quantum walk-based algorithms is
to develop efficient quantum algorithms for searching a marked vertex on a graph. This idea was initially proposed by Shenvi, Kempe, and Whaley~\cite{Shenvi2003PhysRevA} in 2003 who constructed a quantum walk search algorithm on the Boolean hypercube for finding a marked item in a dataset. Later, Ambainis, Kempe and Rivosh~\cite{AmbainisKR05} proposed search algorithms based on quantum walks on $d$-dimensional lattices $(d \geq 2)$. A major breakthrough was made in 2004: Ambainis~\cite{Ambainis07} obtained the optimal query complexity of the element distinctness problem by employing quantum walk search on  Johnson graphs. In 2004, Szegedy~\cite{Szegedy04} studied the general theory of quantum walk search algorithms   from the point of view of Markov chains. In this direction, a series of work~\cite{MagniezNRS11, KroviMOR16, AmbainisGJK20, ApersGJ21} was put  forward for searching a marked state in different Markov chains using phase estimation, interpolated quantum walks and quantum fast-forwarding.

Grover's algorithm can be regarded as a quantum walk search algorithm on a complete graph with a
self-loop on every vertex~\cite{AmbainisKR05}. As pointed out by  Brassard \cite{Science1997},  Grover's quantum searching technique suffers from the souffl\'{e} problem \cite{soffle}. As a result, if  the exact number of marked items is not known in advance, then one does not know when to stop the search iteration. Even if  the number is known, the success probability of the algorithm could shrink dramatically when the number of query steps is greater than the right one, as shown in Fig.~\ref{fig001}(a).
Two strategies are often used to deal with the unknown number of solutions. One method is the exponential search algorithm~\cite{1998Boyer}
in which the number of iterations increases slowly but exponentially. Another strategy is to employ quantum counting~\cite{2002Brassard} to estimate the number of marked items. However, they are still essentially oscillatory Grover search and fail to completely solve the souffl\'{e} problem.
From a search perspective, the success probability of getting a marked item should not shrink (at least not shrink dramatically) as the number of search steps increases.

In order to overcome the souffl\'{e} problem, Grover \cite{Grover05PRL} proposed a fixed-point quantum search algorithm that  converges monotonically to the
target, i.e., avoid overcooking by always amplifying the marked items (as shown in Fig.~\ref{fig001}(b)). Yet, a price
paid for this monotonicity  is that the quadratic
speedup of the original quantum search is lost. 
In 2014, Yoder, Low and Chuang~\cite{YoderLC14} presented a new  quantum search algorithm  that achieves both goals---the search cannot be overcooked and also achieves
optimal time scaling, a quadratic speedup over classical
unordered search. This algorithm
does not require  that the error monotonically
improves, but ensures that the error
becomes bounded by a tunable parameter $\epsilon$, as shown in Fig.~\ref{fig001}(c). Thus, this leads to a more robust  quantum search algorithm. In addition, the fixed-point quantum search was discussed from an information perspective by Cafaro \cite{CAFARO2017154} and from the view of analog quantum search with suitable Hamiltonians specifying time-dependent two-level quantum systems by Cafaro and Alsing \cite{cafaro2019}.

\begin{figure}[htp]
    \centering
    \includegraphics[width=7.5cm]{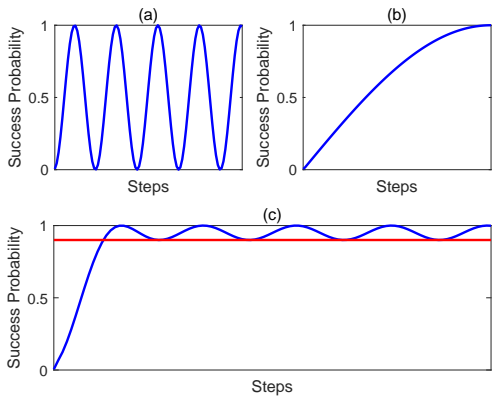}\\
    \caption{The success probability of finding a marked item as a function
of steps in three search models. (a) Grover-type oscillatory search. (b) Fixed-point Search. (c) Robust search.   }
    \label{fig001}
\end{figure}
Currently, various quantum walk search algorithms also suffer from the souffl\'{e}  problem. To the best of our knowledge, there has been no work considering how to avoid the soffl\'{e} problem from the perspective of quantum walk search. Actually,  the problem becomes more challenging than in Grover's quantum search.
There are at least three  reasons for the difficulty of addressing the souffl\'{e}  problem for quantum walk search algorithms. Firstly,
the search space is more complicated because of the diversity of topological structure of graphs.  Secondly, more operations are involved in quantum walk search. Thirdly, it is generally difficult to get an analytical expression for the success probability and  to analyze the computational complexity theoretically of a  quantum walk search algorithm.

\subsection{Our contributions}
This paper   considers for the first time how to address the souffl\'{e}  problem confronted by quantum walk search algorithms. We initiate the first step towards this direction   by designing a robust quantum walk search algorithm  on complete bipartite graphs (NOT complete graphs). Note that this kind of  graphs was extensively studied in quantum walk search algorithms~\cite{ReitznerHFB09,RhodesT19,2014CH,2022QMWXWX}. The robustness feature of our algorithm ensures that for an $N$-vertex complete bipartite graph with marked vertices  but without knowing the number of marked vertices, if the number of search steps $h$ satisfies  $h \geq \ln(\frac{2}{\sqrt{\epsilon}})\sqrt{N}+1$, then  the algorithm will output a marked vertex with probability at least $ 1-\epsilon$ for any  given $\epsilon\in
(0,1]$ (the formal statement can be found in Theorem \ref{oneOddTheorem}).  Thus, the algorithm both avoids overcooking and keeps  quadratic speedup over classical ones. Also note that our algorithm need not  know the number of target vertices.

In order to obtain the above result,  some nontrivial technical treatments are required.
 (1)  First, Compared to Grover's algorithm, the coined quantum walk search framework has two subsystems and three operations. Thus, what operations should be adjusted  to create a robust version is not obvious, and we show that a model with two parameterized operations  is sufficient.
(2) Second,  while  one needs only consider essentially a  two-dimensional state space for Grover's algorithm,  higher dimensional state spaces are involved in quantum walk search and thus it is even not easy to track the state of the quantum system. Luckily,
we reveal some crucial observations (especially Lemma \ref{lemmasabs}) to  simplify the expression of the final state. It is worth noting that these observations  are specific to quantum walks and are not seen in the robust version of Grover's algorithm  \cite{YoderLC14}.
    We think that these technical treatments may inspire the analysis of robust quantum walk search on other general graphs.
\begin{theorem}\label{oneOddTheorem}
Given    an N-vertex complete bipartite graph with
marked vertices but without knowing the number of marked vertices, there exists a quantum walk-based algorithm such  that if the number of search steps $h$ satisfies  $ h \geq \ln(\frac{2}{\sqrt{\epsilon}})\max(\sqrt{N_l}, \sqrt{N_r}) +1$, then the algorithm  will output a marked vertex with probability at least $ 1-\epsilon$ for any given $\epsilon \in (0,1]$, where $N_l$ ($N_r$)  is the number of the left (right) vertices in the complete bipartite graph.
\label{thm-1}
\end{theorem}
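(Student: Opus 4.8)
The plan is to build a parameterized coined quantum walk whose effective low-dimensional dynamics mimics the generalized fixed-point iteration of Yoder, Low and Chuang~\cite{YoderLC14}, and then to choose the phase parameters so that the amplitude on the marked subspace saturates at $1-\epsilon$. First I would set up the walk on the complete bipartite graph with the standard coin-$\otimes$-position Hilbert space, introducing an oracle-dependent coin reflection carrying a tunable phase $\beta_j$ and a reference reflection (about the initial uniform state, realized through the shift/coin combination) carrying a tunable phase $\alpha_j$ at step $j$. These are the ``two parameterized operations'' promised in the contributions, and the claim to be verified is that this two-parameter family is rich enough to realize the robust schedule.

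The second step exploits the strong symmetry of the complete bipartite graph to collapse the ostensibly high-dimensional walk space onto a low-dimensional invariant subspace. By grouping vertices into marked and unmarked classes on each side and passing to uniform superpositions, the walk should preserve a constant-dimensional subspace containing the initial uniform state. Here I expect to invoke Lemma~\ref{lemmasabs}, which the authors flag as crucial: it should supply a closed expression for the absolute values of the relevant amplitudes after each parameterized step, thereby reducing the evolution to an effective two-level rotation characterized by a single angle---morally the analogue of $\arcsin\sqrt{k/N}$ in Grover search, but with the graph geometry now contributing the $\max(\sqrt{N_l},\sqrt{N_r})$ factor.

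With the dynamics reduced to an effective qubit, the third step is to import the fixed-point phase schedule. The phases $\alpha_j,\beta_j$ are chosen from the Chebyshev-polynomial recipe so that, after $h$ steps, the residual amplitude on the unmarked subspace is bounded by a Chebyshev polynomial evaluated at a point fixed by the (unknown) number of marked vertices. Because this bound is uniform in that number, the algorithm never needs to know it---which is exactly what defeats the souffl\'{e} problem. I would then translate the Chebyshev tail bound into the success-probability guarantee $1-\epsilon$ and read off the step count, checking that $h \ge \ln(2/\sqrt{\epsilon})\max(\sqrt{N_l},\sqrt{N_r})+1$ suffices.

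The main obstacle, as the authors themselves stress, is the second step: unlike Grover's algorithm the coined walk does not live natively in two dimensions, so establishing that the parameterized operations keep the state inside a clean invariant subspace, and that the absolute amplitudes obey a single-angle rotation law, is delicate. In particular, the shift operator couples the two sides of the bipartition asymmetrically when $N_l \ne N_r$, and I would expect Lemma~\ref{lemmasabs} to be precisely the technical device that disentangles this coupling, so that the $\max(\sqrt{N_l},\sqrt{N_r})$ scaling---rather than a messier joint dependence on both sizes---cleanly emerges.
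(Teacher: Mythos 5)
Your overall scaffolding (parameterized coin $C(\alpha)$ and oracle $Q(\beta)$, reduction to an invariant subspace, a Chebyshev phase schedule imported from Yoder--Low--Chuang, and a worst-case bound over the unknown number of marked vertices yielding $\max(\sqrt{N_l},\sqrt{N_r})$) does match the paper's strategy, but the step you yourself flag as the crux is mischaracterized, and as stated it would fail. Lemma~\ref{lemmasabs} is not a statement about absolute values of amplitudes, and the reduced dynamics is not ``an effective two-level rotation characterized by a single angle.'' The invariant subspace is $4$-dimensional when the marked vertices lie on one side and $8$-dimensional when they lie on both sides, and a single step $SC(\alpha)Q(\beta)$ does not in general act as a rotation on a two-dimensional subspace containing the initial state: the flip-flop shift $S$ interleaved between coin steps is precisely what blocks a naive transplant of the fixed-point analysis. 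Lemma~\ref{lemmasabs} is the operator rearrangement identity $SB_1SB_2S=B_2SB_1$ for block matrices built from the $A(\theta)$ and $R(\theta)$ factors; together with the factorizations $C(\alpha)=e^{-i\alpha/2}A(\frac{\pi}{2})R(\alpha)A(-\frac{\pi}{2})$ and $Q(\beta)S=-e^{i\beta/2}SR(\beta)$, it is used to commute all the shifts out of the product $\varGamma_h$, regrouping it (up to global phase) into $S\big[A(\eta_h)\cdots A(\eta_1)\big]R(\alpha_1)SR(\beta_h)\big[A(\zeta_h)\cdots A(\zeta_1)\big]\ket{\bar{0}}$. Only after this regrouping do the $2\times 2$ blocks of the $A$-products obey the quasi-Chebyshev recurrence of Lemma~\ref{lemma1}, so that the phases can be matched to the fixed-point schedule; your plan skips exactly this regrouping, and without it the phases $\alpha_j,\beta_j$ cannot simply be ``chosen from the Chebyshev-polynomial recipe.''

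Moreover, your single effective-qubit picture cannot reproduce the case structure the theorem actually requires. The paper treats ``marked on one side'' and ``marked on both sides'' separately (Theorems~\ref{oneSetTheorem} and~\ref{twoSetTheorem}, via Lemmas~\ref{lemma2} and~\ref{lemma3}): in the two-sided case the success probability for odd $h$ is $1-\epsilon^2\,T_h^2(x_1/\gamma)\,T_h^2(x_2/\gamma)$ with two distinct arguments $x_1=\sqrt{1-n_l/N_l}$ and $x_2=\sqrt{1-n_r/N_r}$ --- a product of two Chebyshev factors, not one --- and there is a further split between odd and even $h$ (the even case uses orders $h\pm 1$ with two different $\gamma$'s). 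Theorem~\ref{oneOddTheorem} then follows because the same parameter values $\alpha_i,\beta_i$ (depending only on $h$ and $\epsilon$) serve in both cases, so the algorithm need not know the arrangement of marked vertices. Finally, the $\max(\sqrt{N_l},\sqrt{N_r})$ threshold does not emerge from any asymmetry of the shift operator being ``disentangled,'' as you suggest; it comes from the trivial worst-case bounds $\sqrt{N_l/n_l}\le\sqrt{N_l}$ and $\sqrt{N_r/n_r}\le\sqrt{N_r}$, valid since $n_l,n_r\ge 1$.
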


The relationship among  the theorems and technical lemmas obtained in this paper are depicted  in  Fig.~\ref{figTheorem}.  Theorem~\ref{oneOddTheorem} states the main result of this paper  which  comes from Theorems~\ref{oneSetTheorem} and \ref{twoSetTheorem} which corresponds respectively to the two cases: the marked vertices are in  one side and in two sides of a complete bipartite graph. Furthermore, Lemma  \ref{lemma2} ( Lemma  \ref{lemma3}) is crucial for proving  Theorem~\ref{oneSetTheorem} (Theorem \ref{twoSetTheorem}), with proofs given in  Section \ref{method} and Appendixes \ref{AppendixA}, \ref{AppendixB}, \ref{AppendixC} and \ref{AppendixD}.

\begin{figure}[htp]
    \centering
     \includegraphics{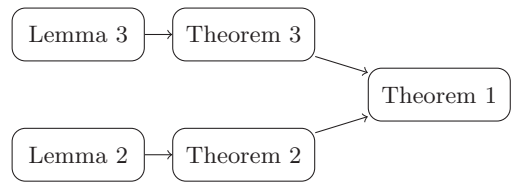}\\
    \caption{The relationship among the theorems and technical lemmas.}
    \label{figTheorem}
\end{figure}

\section{Preliminaries}
\medskip
\noindent\textbf{Graph Notation.}
Let $G=(V,E)$ be an undirected, unweighted graph with $N=|V|$ vertices and $m=|E|$ edges. An edge between $u$ and $v$ is denoted by $(u,v)$. For $u \in V$, $deg(u)=\{v: (u,v)\in E\}$ denotes the set of neighbors of $u$, and the degree of $u$ is denoted as $d_u=|deg(u)|$. A bipartite graph is represented as $G=(V=\{V_l \cup V_r\}, E$), where $V_l $ ($V_r $) denotes the set of vertices in the left (right) side, with $V_l \cap V_r=\emptyset$. We use $N_l$ and $N_r$ to denote the number of left and right vertices, respectively. The number of the marked vertices in the left (right) side is $n_l$ ($n_r$).
A complete bipartite graph is a  bipartite graph where every vertex in the left side is connected to every vertex in the right side. For example, a complete bipartite graph in Fig.~\ref{fig:1a} contains $6$ vertices in the left side and $4$ vertices in the right side. 


\begin{figure}[htp]
\centering
\subfigure[] { \label{fig:1a}
\includegraphics[width=0.13\textwidth]{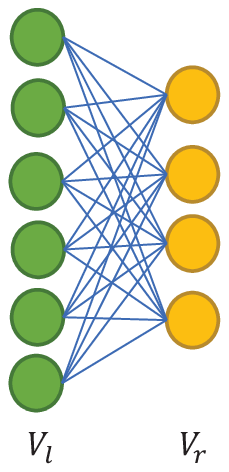}
}
\subfigure[] { \label{fig:1b}
\includegraphics[width=0.13\textwidth]{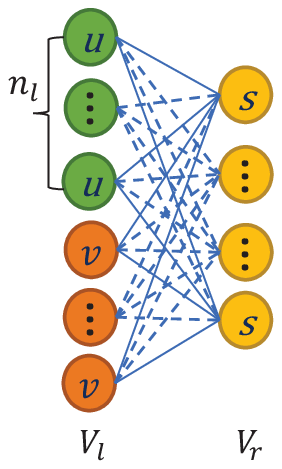}
}
\subfigure[] { \label{fig:1c}
\includegraphics[width=0.13\textwidth]{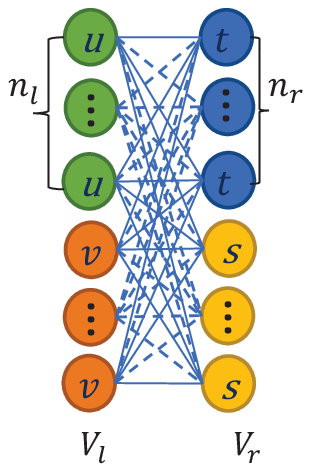}
}
\caption{(a)  A complete bipartite graph G with 10 vertices. (b) An N-vertex complete bipartite graph with the marked vertices denoted by $u$ in the left, the unmarked vertices denoted by $v$ in the left and $s$ in the right. (c)  An N-vertex complete bipartite graph with the marked vertices denoted by $u$ in the left and $t$ in the right, the unmarked vertices denoted by $v$ in the left and $s$ in the right.}
\label{fig:1}
\end{figure}

\medskip
\noindent\textbf{Coined Quantum Walk.}
In this model, the walker's Hilbert space associated with an $N$-vertex graph $G=(V,E)$ is $\mathbb{H}^{N^2}= span\{\ket{uv}, u,v \in V\}$,
where  $u$ is the position and $v$ is the coin value representing one  neighbor of $u$. The evolution operator of the coined quantum walk at each step is $ U_{\text{walk}}=SC,$
where the flip-flop shift operator $S$ is  defined as $S\ket{uv}=\ket{vu},$
and the coin operator is
$C=\sum_u\ket{u}\bra{u}\otimes C_u.$
 The Grover diffusion coin operator $C_u$ often used is $C_u=2\ket{s_u}\bra{s_u}-I,$
where $\ket{s_u}= \frac{1}{\sqrt{d_u}} \sum_{v\in deg(u)}\ket{v}.$

Given the initial state $\ket{\Psi_0}$ ,  the walker's state after $h$ steps is
$\ket{\Psi_h}=U_{\text{walk}}^h\ket{\Psi_0}$.

\medskip
\noindent\textbf{Quantum Walk Search.}
In the quantum walk search framework, to find a marked vertex in a graph, the query oracle $Q$ is given by
\begin{equation*}
    Q\ket{uv}=
    \begin{cases}
    -\ket{uv}& \text{if $u$ is marked},\\
    \quad \ket{uv}& \text{if $u$ is not marked.}
    \end{cases}
\end{equation*}
The evolution operator corresponding to one step of the quantum walk search  is $U=SCQ.$

Given the initial state $\ket{\Psi_0}$, the walker's state after $t$ steps is
$\ket{\Psi_t}=U^t\ket{\Psi_0}$. Finally,  the first register is measured and the measurement result is output.

\medskip
\noindent\textbf{Chebyshev polynomial.}
The Chebyshev polynomials of the first kind $T_n(x)$ are defined by initial values $T_{0}(x)=1$, $T_{1}(x)=x$, and for an
integer $n \geq 2,$
\begin{equation*}
    T_n(x)=2xT_{n-1}(x)-T_{n-2}(x).
\end{equation*}
The trigonometric identity $T_{n}(x)=\cos(n\arccos(x))$ is well known.

A result of one Quasi-Chebyshev polynomial implied  in \cite{YoderLC14} is stated in the following lemma.

\begin{lemma}\label{lemma1}Let $x=\cos(\theta)$ for $\theta \in [0,2\pi]$. Let $h \geq 3$ be an odd integer. One Quasi-Chebyshev polynomial $a_k(x)$ is defined by initial values $a_0(x)=1$, $a_1(x)=x$,
and for $k=2,\dots,h,$
\begin{equation*}
   a_k(x)=x(1+e^{-i(\zeta_k-\zeta_{k-1})})a_{k-1}(x)-e^{-i(\zeta_k-\zeta_{k-1})}a_{k-2}(x).
\end{equation*}
When $\zeta_{k+1}-\zeta_k=(-1)^k\pi-2 \arccot\left(\tan(k\pi/h)\sqrt{1-\gamma^2}\right)$ for $k=1,\dots,h-1$, where $\gamma = \frac{1}{cos(\frac{1}{h}\arccos(\frac{1}{\sqrt{\epsilon}}))}$ with $\epsilon \in(0,1]$, we have $a_h(x)=\frac{T_h(x/\gamma)}{T_h(1/\gamma)}$ with $T_h(1/\gamma)=1/\sqrt{\epsilon}$.
\end{lemma}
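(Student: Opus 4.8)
The plan is to recast the scalar three-term recurrence as a product of $2\times2$ transfer matrices and to show that, under the prescribed schedule of phases, this product collapses to the claimed Chebyshev ratio. Writing $\phi_k=\zeta_k-\zeta_{k-1}$ and stacking $(a_k,a_{k-1})^{\mathsf T}=M_k\,(a_{k-1},a_{k-2})^{\mathsf T}$ with $M_k=\left(\begin{smallmatrix} x(1+e^{-i\phi_k}) & -e^{-i\phi_k}\\ 1 & 0\end{smallmatrix}\right)$, I would first normalize the determinant by passing to $\hat M_k=e^{i\phi_k/2}M_k$, so that $\det\hat M_k=1$ and the diagonal entry becomes the real number $2x\cos(\phi_k/2)$. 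The essential point to keep in view is that the intermediate $a_k$ are genuinely complex --- already $a_2=x^2+e^{-i\phi_2}(x^2-1)$ --- so the whole content of the lemma is a phase cancellation, forced by the specific $\zeta_k$, that renders $a_h=a_h(x)$ a \emph{real} degree-$h$ polynomial equal to $T_h(x/\gamma)/T_h(1/\gamma)$.

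The structure that makes this cancellation possible is a palindromic symmetry of the phase schedule. Since $\tan((h-j)\pi/h)=-\tan(j\pi/h)$ and $\arccot(-y)=\pi-\arccot(y)$, the alternating sign $(-1)^k$ conspires so that the factor $\phi_k$ (which uses the angle $(k-1)\pi/h$) is paired anti-symmetrically with $\phi_{h-k+2}$; as $h$ is odd there are $h-1$ factors and they pair up exactly. This is precisely the composite-pulse symmetry $\alpha_j=-\beta_{L+1-j}$ underlying the optimal fixed-point search of \cite{YoderLC14}. I would exploit it by factoring each $\hat M_k$ into a fixed ``$x$-rotation'' carrying the variable $x=\cos\theta$ conjugated by $z$-phases built from $\phi_k/2$, so that the product $\hat M_h\cdots\hat M_2$ takes the form of a symmetric composite-rotation (quantum-signal-processing) sequence whose relevant matrix element is a real polynomial in $x$ --- the Quasi-Chebyshev polynomial $a_h$.

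The crux, and the step I expect to be the main obstacle, is to evaluate this symmetric product for fixed $h$ and identify its matrix element with $T_h(x/\gamma)/T_h(1/\gamma)$. A naive induction on $h$ is not available, because the phases themselves depend on $h$ (through both $(k-1)\pi/h$ and $\gamma$), so decreasing $h$ changes every factor; the argument must instead evaluate the length-$(h-1)$ product at fixed $h$. My plan is to combine the palindromic pairs $\{\hat M_k,\hat M_{h-k+2}\}$ and to translate each $\arccot$ phase into a rational function of $\gamma$ through $\cos 2\psi_k=\frac{\cot^2\psi_k-1}{\cot^2\psi_k+1}$ with $\cot\psi_k=\tan((k-1)\pi/h)\sqrt{1-\gamma^2}$, then to show by tracking the full vector $(a_k,a_{k-1})$ (a single entry does not close the computation) that the accumulated polynomial obeys the Chebyshev recurrence $T_h(y)=2yT_{h-1}(y)-T_{h-2}(y)$ in the variable $y=x/\gamma$. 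As a check of the mechanism, the case $h=3$ can be done by hand: the prescribed phases give $a_3=x\big[(2-2\cos2\psi)x^2+(2\cos2\psi-1)\big]$ with $\cos2\psi=(2-3\gamma^2)/(4-3\gamma^2)$, which indeed equals $T_3(x/\gamma)/T_3(1/\gamma)=x(4x^2/\gamma^2-3)/(4/\gamma^2-3)$. The difficulty in general is the bookkeeping of this matching across all odd $h$, where either a careful factor-pairing argument or a direct appeal to the composite-pulse identity of \cite{YoderLC14} is needed.

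The final normalization is then immediate. Since $1/\gamma=\cos\!\big(\tfrac1h\arccos(1/\sqrt{\epsilon})\big)$, the identity $T_h(\cos z)=\cos(hz)$ gives $T_h(1/\gamma)=\cos\!\big(\arccos(1/\sqrt{\epsilon})\big)=1/\sqrt{\epsilon}$; for $\epsilon<1$ one reads this through $\arccos(1/\sqrt{\epsilon})=i\,\mathrm{arccosh}(1/\sqrt{\epsilon})$, so that $1/\gamma=\cosh\!\big(\tfrac1h\mathrm{arccosh}(1/\sqrt{\epsilon})\big)\ge 1$, $\sqrt{1-\gamma^2}$ remains real, and the hyperbolic form of the same identity applies. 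This closes the statement.
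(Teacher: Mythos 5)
First, a point of reference: the paper never proves this lemma at all. It is stated as a result ``implied in'' Ref.~\cite{YoderLC14}, so the paper's entire justification is a citation to the composite-pulse identity of Yoder, Low and Chuang. Your fallback option --- ``a direct appeal to the composite-pulse identity of \cite{YoderLC14}'' --- is therefore exactly the paper's treatment. Moreover, everything you actually carry out is correct: the transfer matrices $M_k$, the normalization $\hat M_k=e^{i\phi_k/2}M_k$ with $\det\hat M_k=1$ and diagonal entry $2x\cos(\phi_k/2)$; the palindromic relation $\phi_k+\phi_{h+2-k}\equiv 0 \pmod{2\pi}$ (from $\tan((h-j)\pi/h)=-\tan(j\pi/h)$, $\arccot(-y)=\pi-\arccot(y)$, and $h$ odd), so that $e^{-i\phi_{h+2-k}}=e^{i\phi_k}$ and the $h-1$ factors pair exactly; the $h=3$ computation, which I checked reproduces $x(4x^2-3\gamma^2)/(4-3\gamma^2)=T_3(x/\gamma)/T_3(1/\gamma)$; and the endpoint identity $T_h(1/\gamma)=\cos(\arccos(1/\sqrt{\epsilon}))=1/\sqrt{\epsilon}$ together with the observation that $1/\gamma=\cosh(\tfrac1h\mathrm{arccosh}(1/\sqrt{\epsilon}))\ge 1$, so $\sqrt{1-\gamma^2}$ and all phases are real.

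Judged as a self-contained proof, however, there is a genuine gap at precisely the step you flag as the crux: the evaluation of the length-$(h-1)$ product for general odd $h$ is never performed, and the mechanism you sketch for it would not work as stated. The mirror factors $\hat M_k$ and $\hat M_{h+2-k}$ are not adjacent in the ordered product $\hat M_h\cdots\hat M_2$, so ``combining the palindromic pairs'' is not an available move without a commutation argument you do not supply; what the palindrome actually yields is a symmetry constraint relating the product to its reversal, not an evaluation of it. Likewise, the plan to ``show the accumulated polynomial obeys $T_h(y)=2yT_{h-1}(y)-T_{h-2}(y)$'' cannot be taken literally: the partial products $a_k$ for $k<h$ are complex and are not Chebyshev ratios for any rescaled $\gamma$ (the phases, and $\gamma$ itself, are tied to $h$, as you yourself note when ruling out induction on $h$), so there is no sequence in hand that satisfies that three-term recursion. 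The missing idea is a mechanism that pins down the specific degree-$h$ polynomial. Two standard options: (i) note that $a_h(1)=1$ (immediate from the recurrence, since $a_k(1)=1$ for all $k$) and prove that $a_h$ vanishes at the $h$ roots $x_j=\gamma\cos((2j-1)\pi/2h)$ of $T_h(x/\gamma)$, whence equality follows by degree counting --- this root-matching is where the particular $\arccot$ values do their work and is effectively how the phases are derived in \cite{YoderLC14}; or (ii) reproduce that paper's induction on $k$ at fixed $h$, which requires an explicit closed form for the intermediate complex amplitudes, not just for $a_h$. Without one of these (or the outright citation, which is what the paper does), your argument establishes the framework, the case $h=3$, and the normalization, but not the lemma.
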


\section{Robust quantum walk search on complete bipartite graphs}
As mentioned before, the already existing quantum walk search algorithms suffer from the  souffl\'{e} problem. Thus, this work devotes to  addressing this problem  by considering the case of  searching a marked vertex in a complete bipartite graph. For that, first
 the coin operator $C$ and the query oracle $Q$ have  to be adjusted, but the flip-flop shift operator $S$  can remain unchanged.
The new evolution operator of one search step is
\begin{equation}\label{Ualphabeta}
    U(\alpha,\beta)=SC(\alpha)Q(\beta),
\end{equation}
where the coin operator $C$ is changed to
\begin{equation}\nonumber
    C(\alpha) =\sum_u\ket{u}\bra{u} \otimes[(1-e^{-i\alpha})\ket{s_u}\bra{s_u}-I],
\end{equation}
and the query oracle $Q$ is replace by
\begin{equation*}
    Q(\beta)\ket{uv}=
    \begin{cases}
    e^{i\beta}\ket{uv}& \text{if $u$ is marked},\\
     \ket{uv}& \text{if $u$ is not marked.}
    \end{cases}
\end{equation*}
When $\alpha=\beta=\pm\pi$, this model becomes the original quantum walk search~\cite{AmbainisKR05,RhodesT19}.

 The algorithm of search on a complete bipartite graph is given in Algorithm \ref{algorithm}.
In the input phase,
according to  the information of marked vertices and a tunable parameter $\epsilon$, the number of search steps $h$ is required to satisfy

\begin{widetext}
\begin{equation}\label{steph}
    h\geq
    \begin{cases}
    \ln(\frac{2}{\sqrt{\epsilon}})\sqrt{\frac{N_l}{n_l}}+1 &\text{marked vertices in one side with $n_l \geq 1,n_r=0$,}\\
   \ln(\frac{2}{\sqrt{\epsilon}})\max(\sqrt{\frac{N_l}{n_l}},\sqrt{\frac{N_r}{n_r}}) +1 &\text{marked vertices in two sides with $n_l \geq 1,n_r\geq 1$,}\\
    \ln(\frac{2}{\sqrt{\epsilon}})\max(\sqrt{N_l}, \sqrt{N_r}) +1 &\text{without knowing the number and any arrangement of marked vertices.}\\
    \end{cases}
\end{equation}
The  parameters $\alpha, \beta$ are given by

\textbf{Case 1: $h$ is odd,}

\begin{equation}\label{alphabetaOddh}
    \alpha_k=
    \begin{cases}
        -\beta_{h+2-k}=2\arccot(\tan(\frac{k\pi}{h})\sqrt{1-\gamma^2}) &
                       \text{ $k=2,4,\dots,h-1$},\\
    -\beta_{h-k}=2\arccot(\tan(\frac{(k-1)\pi}{h})\sqrt{1-\gamma^2})&
                  \text{ $k=3,5,\dots,h,$}\\
\text{$\alpha_1$ and $\beta_h$ can be any value,}
    \end{cases}
\end{equation}
where $\gamma^{-1}=\cos(\frac{1}{h}\arccos(\frac{1}{\sqrt{\epsilon}}))$.

\textbf{Case 2: $h$ is even,}
\begin{equation}\label{alphabetaEvenh}
\alpha_k=-\beta_{h+1-k}=
\begin{cases}
2\arccot(\tan(\frac{k\pi}{h+1})\sqrt{1-\gamma_1^2}) &\text{$k=2,4,\dots,h$}\\
2\arccot(\tan(\frac{(k-1)\pi}{h-1})\sqrt{1-\gamma_2^2}) &\text{$k=3,5,\dots,h-1$,}\\
\text{$\alpha_1$ and $\beta_h$ can be any value,}
\end{cases}
\end{equation}
where $\gamma^{-1}_1=\cos(\frac{1}{h+1}\arccos(\frac{1}{\sqrt{\epsilon}}))$ and $\gamma^{-1}_2=\cos(\frac{1}{h-1}\arccos(\frac{1}{\sqrt{\epsilon}}))$.
\end{widetext}

\begin{algorithm}[H]
\caption{Robust quantum walk search}
\label{algorithm}
\begin{description}
    \item[Inputs] an N-vertex complete bipartite graph with
marked vertices, $\epsilon \in (0,1]$, and the number of search steps $h$.
    \item[Outputs] a marked vertex  $x_0$  (if $h$ satisfies Eq.~\eqref{steph}, it outputs a marked vertex with probability at least $ 1-\epsilon$).
    \item[Procedure]
\end{description}
    \begin{enumerate}
        \item Prepare the initial state $\ket{\Psi_0}= \frac{1}{\sqrt{2N_l N_r}}(\sum_u \ket{u}\otimes\sum_{v\in deg(u)}\ket{v})$.
        \item Apply $U(\alpha_1,\beta_1),..., U(\alpha_h,\beta_h)$ in turn, where the parameters $\alpha_i, \beta_i$ are determined by Eqs.~\eqref{alphabetaOddh} and~\eqref{alphabetaEvenh}.
        \item Measure the first register in the computational basis. If the result vertex is not marked, then the second register is measured. Output the measurement result.
    \end{enumerate}
\end{algorithm}
In the first step, the initial state $\ket{\Psi_0}= \frac{1}{\sqrt{2N_l N_r}}(\sum_u \ket{u}\otimes\sum_{v\in deg(u)}\ket{v})$ is prepared. Then, $U(\alpha_1,\beta_1),..., U(\alpha_h,\beta_h)$  with  appropriate parameters are applied to $\ket{\Psi_0}$ in turn. The whole  operator that performs $h$ steps is $\varGamma_h =U(\alpha_h,\beta_h)...U(\alpha_1,\beta_1)$.
The walker's state after $h$ steps is
$\ket{\Psi_h}=\varGamma_h\ket{\Psi_0}.$
Finally, the two registers are measured. Note that in the previous work generally only the first register is measured, whereas here we measure the two registers. This will double the success probability for our problems as shown later. The success probability of getting a marked vertex is
\begin{equation*}
    P_h = \sum_{ \text{ $u$ or $v$ is marked}}|\bra{uv}\varGamma_h\ket{\Psi_0}|^2.
\end{equation*}

Fig.~\ref{fig:3} illustrates the success probability of finding a marked item as a function of steps in Algorithm~\ref{algorithm}  and the one with $\alpha=\beta=\pm\pi$ in Eq.~\eqref{Ualphabeta}. They show that Algorithm~\ref{algorithm} is a robust search model.

\begin{figure*}[htp]
\centering
\subfigure[] { \label{fig:3a}
\includegraphics[width=0.45\textwidth]{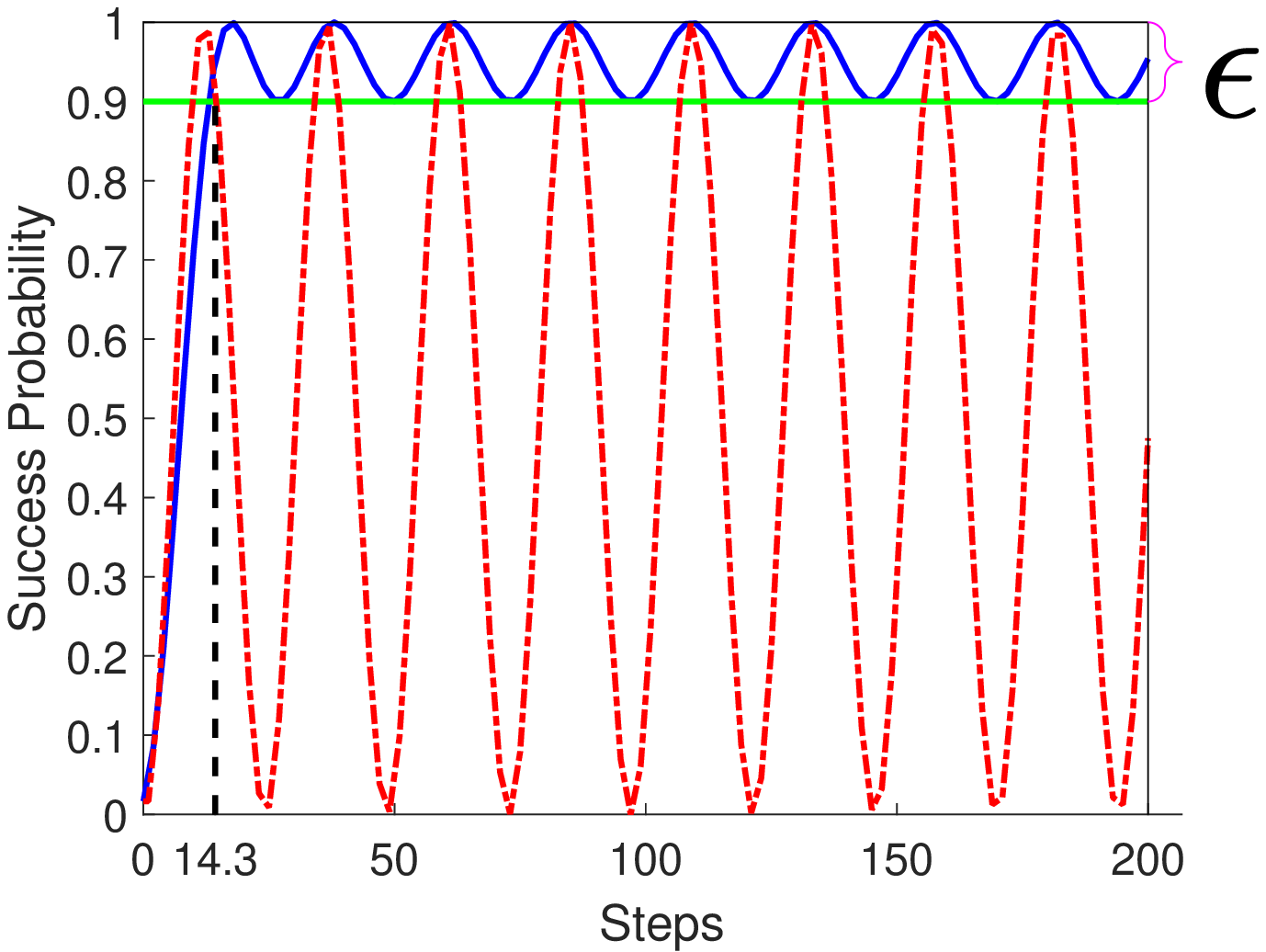}
}
\subfigure[] { \label{fig:3b}
\includegraphics[width=0.45\textwidth]{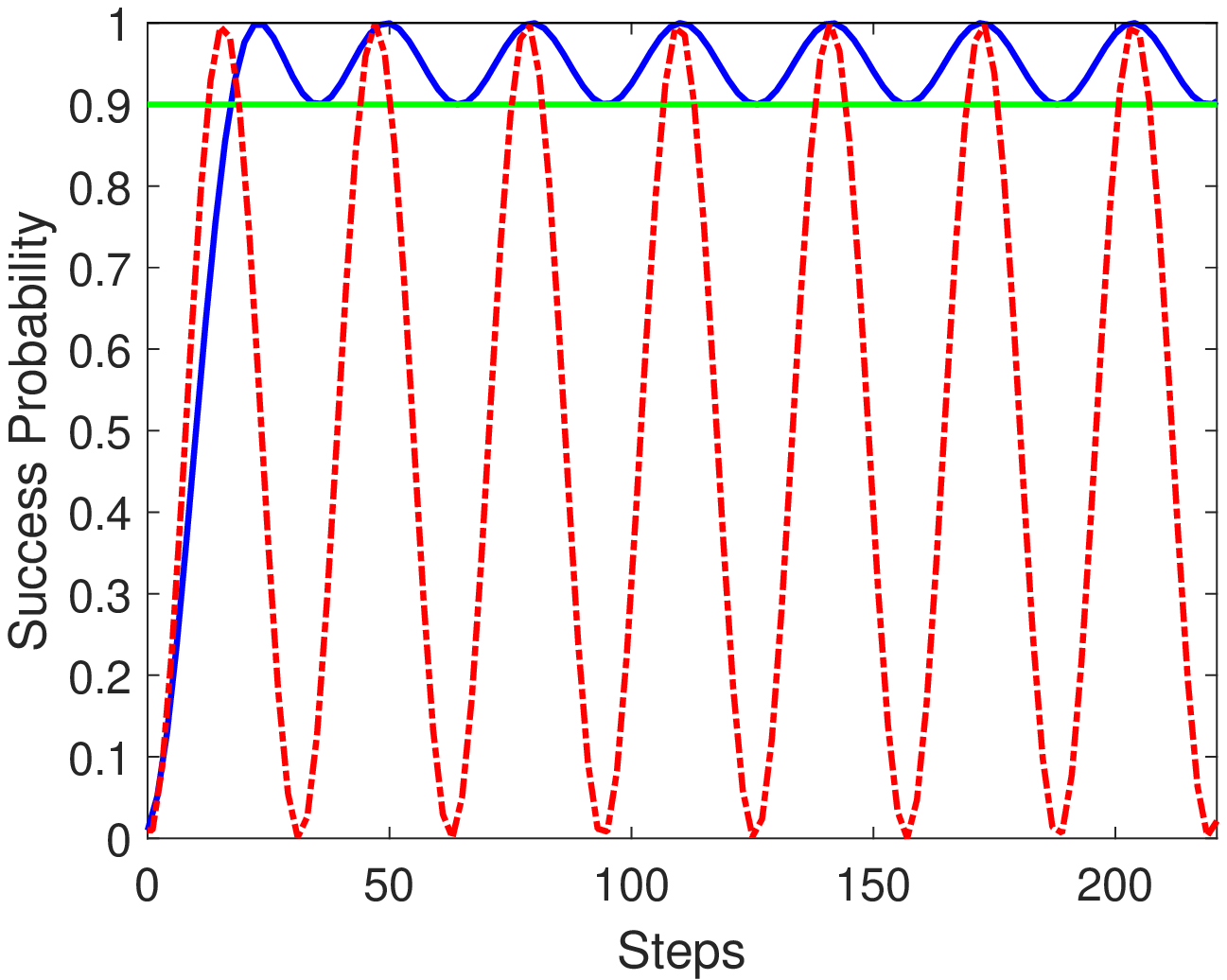}
}
\subfigure[] { \label{fig:3c}
\includegraphics[width=0.45\textwidth]{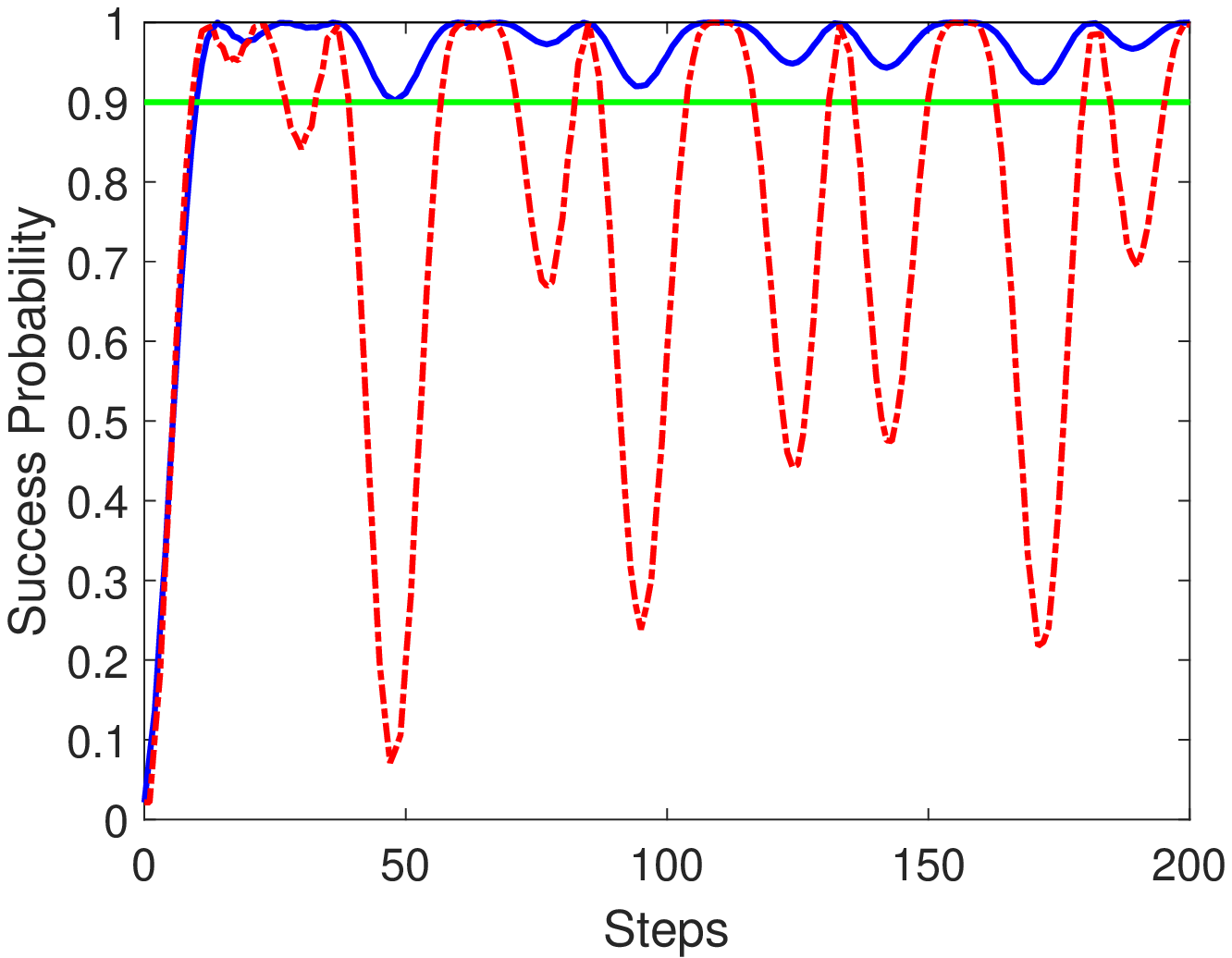}
}
\subfigure[] { \label{fig:3d}
\includegraphics[width=0.45\textwidth]{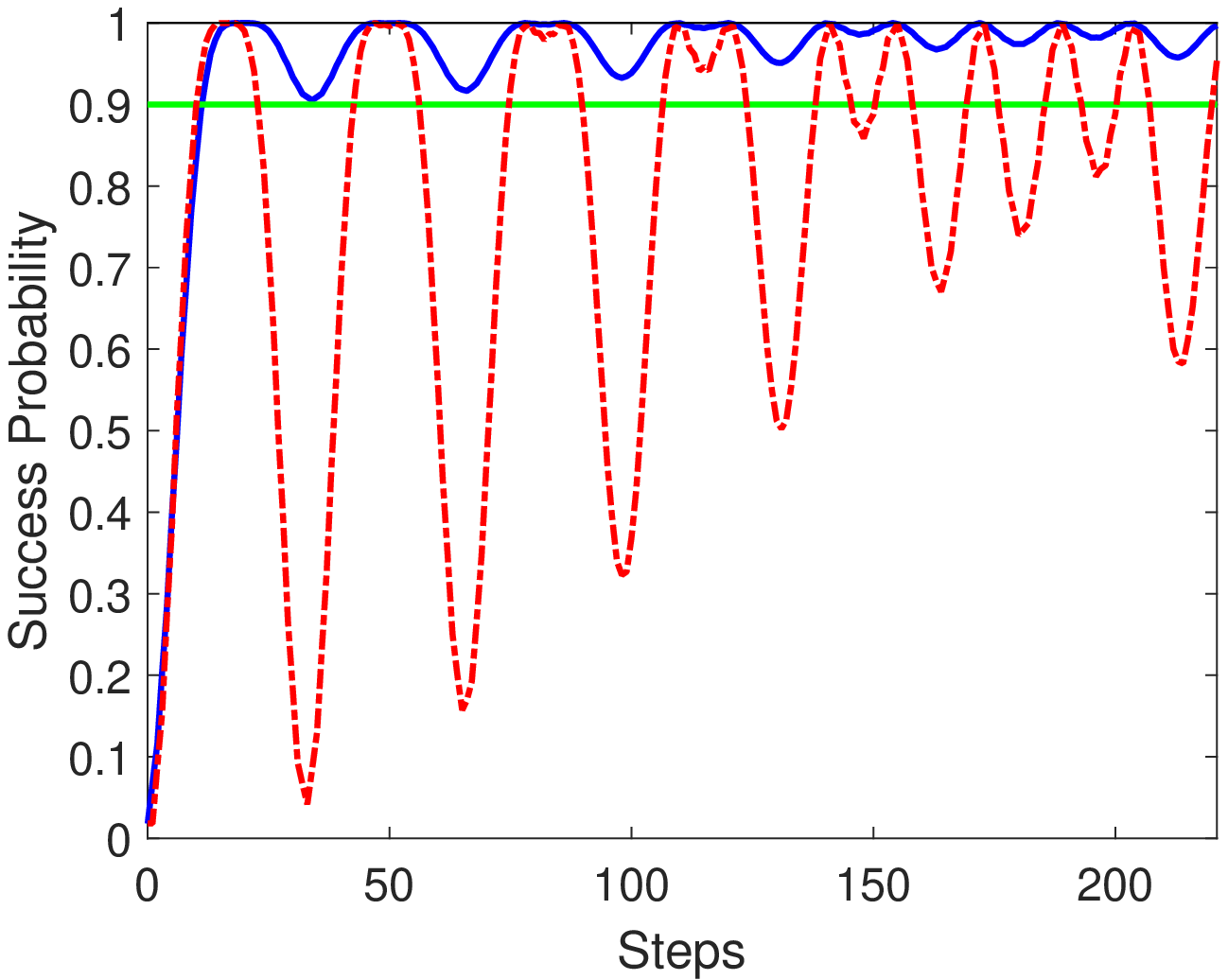}
}
\caption{The success probability of finding a marked item as a function of steps in Algorithm~\ref{algorithm}  (blue solid curve) and the one with $\alpha=\beta=\pm\pi$ in Eq.~\eqref{Ualphabeta} (red dash-dot curve). The green horizontal solid line indicates that the success probability is greater than or equal to $0.9$. (a) $N_l=600, n_l=10,N_r=1000, n_r=0.$ The success probability $P_h  \geq 0.9 $ when $h \geq \ln(\frac{2}{\sqrt{\epsilon}})\sqrt{\frac{N_l}{n_l}}\approx 14.3$ with $\epsilon=0.1$. (b) $N_l=1000, n_l=10,N_r=600, n_r=0.$  (c) $N_l=600, n_l=10,N_r=1000, n_r=5.$ (d) $N_l=1000, n_l=10,N_r=600, n_r=5.$}
\label{fig:3}
\end{figure*}

\medskip
\noindent
\begin{proof}[ \bf  Proof of Theorem \ref{oneOddTheorem}]
According to the arrangement of marked vertices, two cases
are discussed: they are in one side and two sides, as shown in Fig.~\ref{fig:1b} and Fig.~\ref{fig:1c}, respectively.
  \begin{enumerate}
\item [(i)] In the first case, without loss of generality, suppose that  all the marked vertices lie in the left side $V_l$. Then by  Theorem \ref{oneSetTheorem}, if $h\geq \ln(\frac{2}{\sqrt{\epsilon}})\sqrt{\frac{N_l}{n_l}}+1 $,  then Algorithm 1 will  output a marked vertex with probability at least $ 1-\epsilon$.
\item [(ii)] In the second case, by Theorem \ref{twoSetTheorem}, if $h\geq \ln(\frac{2}{\sqrt{\epsilon}})\max(\sqrt{\frac{N_l}{n_l}},\sqrt{\frac{N_r}{n_r}}) +1$,  then Algorithm 1 will  output a marked vertex with probability at least $ 1-\epsilon^2$.

 \end{enumerate}

Note that the parameters  $\alpha_i, \beta_i$ will be assigned with the same values in the  two cases (this can be seen from the proof of Theorems \ref{oneSetTheorem} and \ref{twoSetTheorem}). Thus, we need not know  the arrangement of marked vertices.
In addition, in the two cases, both $\sqrt{\frac{N_l}{n_l}}$ and $\max(\sqrt{\frac{N_l}{n_l}},\sqrt{\frac{N_r}{n_r}})$ have the same upper bound $\max(\sqrt{N_l}, \sqrt{N_r})$. Therefore, if $h \geq \ln(\frac{2}{\sqrt{\epsilon}})\max(\sqrt{N_l}, \sqrt{N_r}) +1$, then
Algorithm~\ref{algorithm} will output a marked vertex with probability at least $1-\epsilon$.
This completes the proof.
\end{proof}

\begin{theorem}\label{oneSetTheorem}
In Algorithm~\ref{algorithm}, suppose that all the marked vertices are in the left side. There exists a sequence of parameters $\alpha_i, \beta_i$, such that if $h\geq \ln(\frac{2}{\sqrt{\epsilon}})\sqrt{\frac{N_l}{n_l}}+1 $,  then the algorithm will output a marked vertex with probability at least $1-\epsilon$, where $N_l$ is the  number of left vertices and $n_l$ is the total number of marked vertices.
\end{theorem}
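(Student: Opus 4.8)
The plan is to exploit the heavy symmetry of the complete bipartite graph to collapse the $N^2$-dimensional walk space onto a small invariant subspace, and then to match the resulting scalar dynamics against the Quasi-Chebyshev polynomial of Lemma~\ref{lemma1}.

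First I would partition the left vertices into the $n_l$ marked ones and the $N_l-n_l$ unmarked ones, while all $N_r$ right vertices are unmarked. Because $U(\alpha,\beta)$ acts identically on all vertices within each such class, the uniform superpositions over the four vertex--coin classes---(marked left, right coin), (unmarked left, right coin), (right position, marked-left coin), (right position, unmarked-left coin)---span a four-dimensional subspace $\mathcal{S}$ that is invariant under $S$, $C(\alpha)$ and $Q(\beta)$ and that contains $\ket{\Psi_0}$. I would therefore write each of the three operators as an explicit matrix on $\mathcal{S}$ in this orthonormal basis, using $d_u=N_r$ for left vertices and $d_s=N_l$ for right vertices together with the inner products involving the coin state $\ket{s_u}$, and then form the $4\times4$ matrix of $U(\alpha,\beta)$.

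Next I would diagonalise the geometry: introduce an angle $\theta$ defined by the relevant overlap between the initial state and the marked sector, so that $\theta$ scales like $\sqrt{n_l/N_l}$ to leading order, and block-decompose $U(\alpha,\beta)$ so that a single two-dimensional block carries all the nontrivial dynamics. In that block one step acts as a rotation whose phase is controlled by $\alpha_k$ and $\beta_k$, and iterating $h$ such steps produces an amplitude on the non-target component that obeys exactly the three-term recursion of Lemma~\ref{lemma1}, once the phase increments $\zeta_k-\zeta_{k-1}$ are identified with the prescribed schedule in Eq.~\eqref{alphabetaOddh} (treating the odd- and even-$h$ parameter choices separately). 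The technical crux here, which is absent in Grover's purely two-dimensional analysis, is to show that the extra coin register does not generate additional, uncontrolled components, so that the whole $h$-step evolution really reduces to the scalar recursion; I expect verifying this structural simplification to be the main obstacle.

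Granting this reduction, the residual (failure) amplitude after $h$ steps is governed by $a_h(\cos\theta)=T_h(\cos\theta/\gamma)/T_h(1/\gamma)$. Since $|T_h|\le 1$ on $[-1,1]$ and $T_h(1/\gamma)=1/\sqrt{\epsilon}$, whenever $|\cos\theta|\le\gamma$ the failure amplitude is at most $\sqrt{\epsilon}$ in magnitude, hence the failure probability is at most $\epsilon$---and the measurement of both registers is what lets a single bounded amplitude account for the entire miss event (a marked left vertex is caught either as a position or as a coin value). It then remains to convert the clean geometric condition $\cos\theta\le\gamma$ into the explicit step count: using $\gamma^{-1}=\cos(\frac1h\arccos(\frac1{\sqrt\epsilon}))$, the leading-order estimate $\theta\sim\sqrt{n_l/N_l}$, and an elementary bound relating $\arccos\gamma$ to $\tfrac1h\ln(2/\sqrt\epsilon)$, I would show that $h\ge\ln(\frac{2}{\sqrt\epsilon})\sqrt{N_l/n_l}+1$ already forces $\cos\theta\le\gamma$, completing the proof. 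This final inequality chain is routine once the recursion identification is in hand.
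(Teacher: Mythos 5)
Your overall skeleton does match the paper's: reduction to the four-dimensional invariant subspace spanned by the class-uniform states, identification of the failure amplitude with the Quasi-Chebyshev polynomial of Lemma~\ref{lemma1}, the bound $|T_h|\le 1$ on $[-1,1]$ together with $T_h(1/\gamma)=1/\sqrt{\epsilon}$, and the conversion of the geometric condition $\sqrt{1-n_l/N_l}\le\gamma$ into the step bound (which the paper carries out via $\arccos z=\frac{1}{i}\ln(z+\sqrt{z^2-1})$ and $\tanh x\le x$ --- exactly the ``routine'' chain you defer). But there is a genuine gap at the step you yourself flag as the crux, and the mechanism you propose for it would fail. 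You assert that $U(\alpha,\beta)$ can be block-decomposed so that ``a single two-dimensional block carries all the nontrivial dynamics,'' with each step acting as a rotation in that block and $h$ iterations producing the three-term recursion. In the basis $\{\ket{us},\ket{su},\ket{sv},\ket{vs}\}$ a single step preserves no two-dimensional block: $Q(\beta)$ phases $\ket{us}$, $C(\alpha)$ mixes only the pair $(\ket{su},\ket{sv})$ while phasing $\ket{us}$ and $\ket{vs}$, and $S$ then swaps $\ket{us}\leftrightarrow\ket{su}$ and $\ket{sv}\leftrightarrow\ket{vs}$; the composition cyclically entangles all four coordinates, and moreover the parameters $(\alpha_k,\beta_k)$ change from step to step, so there is no fixed rotation to iterate.

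The paper's resolution is global rather than step-by-step, and this is where its real technical content lies: it factorizes $C(\alpha)=e^{-i\alpha/2}A(\frac{\pi}{2})R(\alpha)A(-\frac{\pi}{2})$ and $Q(\beta)S=-e^{i\beta/2}SR(\beta)$, imposes a time-reversal-symmetric parameter schedule ($\alpha_k=-\beta_{h+2-k}$ or $\alpha_k=-\beta_{h-k}$), and then uses the commutation identity $SB_1SB_2S=B_2SB_1$ of Lemma~\ref{lemmasabs} to reorganize the entire product $\varGamma_h$ into the form $S\big[A(\eta_h)\cdots A(\eta_1)\big]R(\alpha_1)SR(\beta_h)\big[A(\zeta_h)\cdots A(\zeta_1)\big]\ket{\bar{0}}$. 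Only these runs of $A$-matrices obey the recursion of Lemma~\ref{lemma1}, and the relation between the increments $\zeta_{k+1}-\zeta_k$, $\eta_{k+1}-\eta_k$ and the original $\alpha_k,\beta_k$ is itself an output of that reorganization, not an input one can simply ``identify.'' Two further consequences are invisible in your sketch: the failure probability is $\frac{1}{2}(|c_h|^2+|\bar{c}_h|^2)$, involving two Chebyshev amplitudes (one per run), and for even $h$ the two runs have lengths $h+1$ and $h-1$, producing $T_{h+1}$ and $T_{h-1}$ terms --- this is precisely the origin of the ``$+1$'' in the theorem's hypothesis, which a treatment blind to the odd/even distinction cannot recover. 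So your proposal correctly identifies the target expression and the endgame, but the heart of the proof --- taming the coin register through Lemma~\ref{lemmasabs} and the symmetric parameter schedule --- is missing, and the per-step shortcut you suggest in its place does not work.
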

\begin{proof}

According to Lemma \ref{lemma2},
when $h$ is odd, in order to ensure  $P_h \geq 1-\epsilon$,  it suffices to  satisfy $|\cos(\frac{1}{h}\arccos(\frac{1}{\sqrt{\epsilon}}))\sqrt{1-\frac{n_l}{N_l}}| \leq 1, $ that is,
\begin{align}\label{key}
\frac{n_l}{N_l} &\geq 1-\cos^{-2}(\frac{1}{h}\arccos(\frac{1}{\sqrt{\epsilon}})).
\end{align}
Note that the following functions in (see for instance Ref.\cite{abramowitz1964handbook})
will be used:
\begin{align*}
    &\arccos (z) = \frac{1}{i}\ln (z+\sqrt{z^2-1}), \tan(i z) = i \tanh(z),\\
    &\tanh(x)=\frac{e^x-e^{-x}}{e^x+e^{-x}},
\end{align*}
where $\ln(\cdot)$ is the natural logarithm function,  $i$ denotes the imaginary number, $x$ is a real number and $z$ is a complex number.  Now we have
\begin{align*}
1-&\cos^{-2}(\frac{1}{h}\arccos(\frac{1}{\sqrt{\epsilon}}))\\
                &=  -\tan^2(\frac{1}{h}\arccos(\frac{1}{\sqrt{\epsilon}}))\\
                &=-\tan^2(\frac{1}{h}\frac{1}{i}\ln (\frac{1}{\sqrt{\epsilon}}+\sqrt{(\frac{1}{\sqrt{\epsilon}})^2-1}))\\
                &=-\tan^2(i \frac{1}{h}\ln (\frac{1}{\sqrt{\epsilon}}+\sqrt{(\frac{1}{\sqrt{\epsilon}})^2-1}))\\
                &=\tanh^2(\frac{1}{h}\ln (\frac{1}{\sqrt{\epsilon}}+\sqrt{(\frac{1}{\sqrt{\epsilon}})^2-1}))\\
                &<\tanh^2(\frac{1}{h}\ln (\frac{2}{\sqrt{\epsilon}}))\\ &<\left(\frac{\ln(2/\sqrt{\epsilon})}{h}\right)^2,
\end{align*}
where the last inequality follows from $x\geq \tanh(x)$ for $x \geq 0.$ Thus, in order to ensure the inequality (\ref{key}), it suffices to set $\frac{n_l}{N_l} \geq \left(\frac{\ln(2/\sqrt{\epsilon})}{h}\right)^2$, which leads to $h\geq\ln(\frac{2}{\sqrt{\epsilon}})\sqrt{\frac{N_l}{n_l}}$.

Similarly, when $h$ is even,  in order to ensure $P_h \geq 1-\epsilon$, it  suffices to  satisfy
$ |\cos(\frac{1}{h+1}\arccos(\frac{1}{\sqrt{\epsilon}}))\sqrt{1-\frac{n_l}{N_l}}|\leq 1, $
and
$ |\cos(\frac{1}{h-1}\arccos(\frac{1}{\sqrt{\epsilon}}))\sqrt{1-\frac{n_l}{N_l}})|\leq 1,$
which implies
$h+1\geq\ln(\frac{2}{\sqrt{\epsilon}})\sqrt{\frac{N_l}{n_l}}$ and $h-1\geq\ln(\frac{2}{\sqrt{\epsilon}})\sqrt{\frac{N_l}{n_l}}$, respectively.

Thus, no matter $h$ is odd or even,  $P_h \geq 1-\epsilon$ holds for $h\geq\ln(\frac{2}{\sqrt{\epsilon}})\sqrt{\frac{N_l}{n_l}}+1$.  This completes the proof of Theorem~\ref{oneSetTheorem}.
\end{proof}
\begin{theorem}\label{twoSetTheorem}
In Algorithm~\ref{algorithm}, suppose that  the marked vertices are in both the left and right sides. There exists a sequence  of parameters $\alpha_i, \beta_i$,  such that if $h  \geq \ln(\frac{2}{\sqrt{\epsilon}})\max(\sqrt{\frac{N_l}{n_l}},\sqrt{\frac{N_r}{n_r}}) +1$,  then the algorithm will output a marked vertex with probability at least $1-\epsilon^2$, where $N_l$ ($N_r$) is the  number of left (right) vertices and $n_l$ ($n_r$) is the  number of marked vertices in the left (right) side.
\end{theorem}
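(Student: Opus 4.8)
The plan is to mirror the two-step strategy used for Theorem~\ref{oneSetTheorem}: first reduce the claimed probability bound to an elementary trigonometric condition by means of a structural lemma (here Lemma~\ref{lemma3}, the two-sided counterpart of Lemma~\ref{lemma2}), and then discharge that condition with the hyperbolic-tangent estimate already established in the one-sided proof.

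First I would establish Lemma~\ref{lemma3}, which is the only genuinely new ingredient. The obstacle specific to this case is that the flip-flop shift $S$ couples the left and right halves of the graph, so the relevant invariant subspace is genuinely higher dimensional and $\ket{\Psi_h}$ cannot be confined to a single two-dimensional plane as in Grover-type analysis. I would look for a small $\varGamma_h$-invariant subspace spanned by the normalized ``marked-left'', ``unmarked-left'', ``marked-right'' and ``unmarked-right'' configurations together with their images under $S$, and use the simplifying observations (notably Lemma~\ref{lemmasabs}) to write $\ket{\Psi_h}$ in closed form. I expect the dynamics on this subspace to split into two channels, one amplifying the left-marked amplitude and one amplifying the right-marked amplitude, each obeying the recurrence of Lemma~\ref{lemma1} with $x=\sqrt{1-n_l/N_l}$ and $x=\sqrt{1-n_r/N_r}$ respectively. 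Because a measurement failure requires both registers to be unmarked, the residual amplitude should factor as a product across the two channels; this is exactly what turns the one-sided floor $1-\epsilon$ into the sharper $1-\epsilon^2$ claimed here.

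Granting Lemma~\ref{lemma3}, the bound on $h$ follows by repeating the argument of Theorem~\ref{oneSetTheorem} once per side. For odd $h$ the sufficient conditions become $|\cos(\frac{1}{h}\arccos(\frac{1}{\sqrt{\epsilon}}))\sqrt{1-\frac{n_l}{N_l}}|\leq 1$ and the same inequality with $n_r/N_r$ in place of $n_l/N_l$; the $\tanh$ estimate then converts these into $h\geq\ln(\frac{2}{\sqrt{\epsilon}})\sqrt{\frac{N_l}{n_l}}$ and $h\geq\ln(\frac{2}{\sqrt{\epsilon}})\sqrt{\frac{N_r}{n_r}}$. For even $h$ one obtains the same two bounds with $h$ replaced by $h+1$ and by $h-1$, exactly as in the one-sided case, and the binding constraint $h-1\geq(\cdots)$ is what produces the additive $+1$. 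Imposing both side-conditions simultaneously yields $h\geq\ln(\frac{2}{\sqrt{\epsilon}})\max(\sqrt{\frac{N_l}{n_l}},\sqrt{\frac{N_r}{n_r}})+1$, as required.

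The hard part is thus entirely concentrated in Lemma~\ref{lemma3}: pinning down the correct low-dimensional invariant subspace for the coupled bipartite walk, checking that the single parameter schedule of Eqs.~\eqref{alphabetaOddh} and~\eqref{alphabetaEvenh} drives both channels at once (so that no knowledge of the marked-vertex arrangement is needed), and verifying the multiplicative factorization of the residual amplitudes that upgrades the floor to $1-\epsilon^2$. Everything downstream of that lemma is a routine two-fold reuse of the estimates already carried out for Theorem~\ref{oneSetTheorem}.
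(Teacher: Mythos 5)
Your proposal follows essentially the same route as the paper: the paper proves Lemma~\ref{lemma3} by working in exactly the eight-dimensional invariant subspace you describe (spanned by $\ket{ut},\ket{us},\ket{tu},\ket{tv},\ket{vt},\ket{vs},\ket{su},\ket{sv}$), where Lemma~\ref{lemmasabs} and Lemma~\ref{lemma1} yield two decoupled Chebyshev channels with arguments $\sqrt{1-n_l/N_l}$ and $\sqrt{1-n_r/N_r}$, and the residual amplitudes on the unmarked--unmarked states $\ket{vs},\ket{sv}$ factor as products across the channels, giving the $1-\epsilon^2$ floor. Your subsequent reduction of the step bound (the $|T_h|\le 1$ conditions per side, the four $h\pm 1$ conditions in the even case, and the $\tanh$ estimate) is precisely the paper's proof of Theorem~\ref{twoSetTheorem}.
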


\begin{proof}
Similar to the proof of Theorem \ref{oneSetTheorem}, by Lemma \ref{lemma3} one can show that
  \begin{itemize}
\item When $h$ is odd, in order to ensure $P_h \geq 1-\epsilon^2$, it suffices to satisfy
$h\geq\ln(\frac{2}{\sqrt{\epsilon}})\sqrt{\frac{N_l}{n_l}}$ and $h\geq\ln(\frac{2}{\sqrt{\epsilon}})\sqrt{\frac{N_r}{n_r}}$.

\item When $h$ is even, in order to ensure $P_h \geq 1-\epsilon^2$, it suffices to satisfy
$h+1\geq\ln(\frac{2}{\sqrt{\epsilon}})\sqrt{\frac{N_l}{n_l}}$, $h-1\geq\ln(\frac{2}{\sqrt{\epsilon}})\sqrt{\frac{N_r}{n_r}}$, $h-1\geq\ln(\frac{2}{\sqrt{\epsilon}})\sqrt{\frac{N_l}{n_l}}$ and $h+1\geq\ln(\frac{2}{\sqrt{\epsilon}})\sqrt{\frac{N_r}{n_r}}$.
 \end{itemize}
Therefore, no matter $h$ is even or odd,  $P_h \geq 1-\epsilon^2$ holds for $h\geq\ln(\frac{2}{\sqrt{\epsilon}})\max(\sqrt{\frac{N_l}{n_l}},\sqrt{\frac{N_r}{n_r}}) +1$.
This completes the proof of Theorem~\ref{twoSetTheorem}.
\end{proof}

\begin{lemma}\label{lemma2}
In Algorithm~\ref{algorithm}, suppose that all the marked vertices are in the left side.  There exists a  sequence  of parameters $\alpha_i, \beta_i$, such that the success probability satisfies
\begin{align*}
   P_h= 1-\epsilon T_h^2(\cos(\frac{1}{h}\arccos(\frac{1}{\sqrt{\epsilon}}))\sqrt{1-\frac{n_l}{N_l}})
\end{align*}
for odd $h$, and
\begin{align*}
  P_h=  &1-\frac{\epsilon}{2}(T_{h+1}^2(\cos(\frac{1}{h+1}\arccos(\frac{1}{\sqrt{\epsilon}}))\sqrt{1-\frac{n_l}{N_l}})\\ & +  T_{h-1}^2(\cos(\frac{1}{h-1}\arccos(\frac{1}{\sqrt{\epsilon}}))\sqrt{1-\frac{n_l}{N_l}})),
\end{align*}
for even $h$.
\end{lemma}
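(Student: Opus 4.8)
The plan is to exploit the high symmetry of the complete bipartite graph to collapse the $N^2$-dimensional walk space onto a small invariant subspace, and then to recognize the resulting recurrence as the Quasi-Chebyshev recurrence of Lemma~\ref{lemma1}.

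First I would carry out the invariant-subspace reduction. Since $U(\alpha,\beta)$ treats all marked left vertices alike, all unmarked left vertices alike, and all right vertices alike, the evolution starting from the symmetric state $\ket{\Psi_0}$ never leaves the four-dimensional space spanned by the uniform superpositions $\ket A$ (position marked-left, coin right), $\ket B$ (position unmarked-left, coin right), $\ket C$ (position right, coin marked-left) and $\ket D$ (position right, coin unmarked-left). Writing $p=\sqrt{n_l/N_l}$, $q=\sqrt{1-n_l/N_l}$, one checks $\ket{\Psi_0}=\tfrac1{\sqrt2}(p\ket A+q\ket B+p\ket C+q\ket D)$, and that in this basis $S$ swaps $A\leftrightarrow C$ and $B\leftrightarrow D$, the oracle $Q(\beta)$ multiplies only $\ket A$ by $e^{i\beta}$, the coin $C(\alpha)$ acts as $-e^{-i\alpha}$ on the left-position block, and on the right-position block it acts as the Grover-type operator $(1-e^{-i\alpha})\ket{s_s}\bra{s_s}-I$ with $\ket{s_s}=p\ket m+q\ket{\bar m}$. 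Because $C,D$ carry no marked position and $C$ carries a marked coin, measuring both registers gives $P_h=|a_h|^2+|c_h|^2=1-|b_h|^2-|d_h|^2$, where $\ket{\Psi_h}=a_h\ket A+b_h\ket B+c_h\ket C+d_h\ket D$; the failure weight thus sits entirely in $\mathrm{span}(B,D)$.

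Next I would extract a scalar recurrence for the failure weight. The matrix of $U(\alpha,\beta)$ has an alternating block structure sending $\mathrm{span}(A,B)\to\mathrm{span}(C,D)$ and back, so $(a_k,b_k)$ is obtained from $(c_{k-1},d_{k-1})$ by the Grover-type $2\times2$ matrix while $(c_k,d_k)$ comes from $(a_{k-1},b_{k-1})$ by a diagonal phase matrix with entries $-e^{i(\beta_k-\alpha_k)}$ and $-e^{-i\alpha_k}$. Diagonalizing the Grover block in the frame $\{\ket{s_s},\ket{s_s}^\perp\}$ and combining two consecutive steps, the failure weight reduces to a single three-term recurrence of exactly the form in Lemma~\ref{lemma1}, evaluated at $x=q=\sqrt{1-n_l/N_l}$ (i.e. $x=\cos\theta$ with $\cos\theta=q$), the phase increments $\zeta_k-\zeta_{k-1}$ being supplied by the combinations of $\alpha_k,\beta_k$ dictated by Eq.~\eqref{alphabetaOddh}. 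Verifying that the prescribed $\alpha_k,\beta_k$ reproduce exactly the angle schedule $\zeta_{k+1}-\zeta_k=(-1)^k\pi-2\arccot(\tan(k\pi/h)\sqrt{1-\gamma^2})$ is the crux; this is the observation specific to quantum walks—absent in the Grover setting of Ref.~\cite{YoderLC14}—that identifies the evolved failure amplitude with the Quasi-Chebyshev polynomial $a_h(x)$.

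With the identification in place, Lemma~\ref{lemma1} gives $a_h(q)=T_h(q/\gamma)/T_h(1/\gamma)=\sqrt\epsilon\,T_h\!\big(\cos(\tfrac1h\arccos(\tfrac1{\sqrt\epsilon}))\,q\big)$, so for odd $h$ the failure weight equals $\epsilon\,T_h^2(\cdots)$ and $P_h=1-\epsilon T_h^2(\cdots)$, as claimed. For even $h$ the two-step block composition no longer closes into a single recurrence of length $h$; instead the channels $B$ and $D$ decouple into two Quasi-Chebyshev recurrences of effective lengths $h+1$ and $h-1$, governed respectively by $\gamma_1,\gamma_2$ of Eq.~\eqref{alphabetaEvenh}. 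Applying Lemma~\ref{lemma1} to each and combining with equal weight $\tfrac12$ yields the stated two-term formula. I expect the main obstacle to be precisely the phase-bookkeeping of step three—showing the walk's coin/oracle phase schedule matches the Quasi-Chebyshev angle schedule—together with checking that the even-$h$ dynamics split cleanly into the length-$(h\pm1)$ recurrences.
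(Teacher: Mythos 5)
Your proposal is correct in substance and arrives at the right formulas, but it takes a genuinely different route from the paper's for the central reduction step. The paper never decouples the dynamics step by step: it keeps the full interleaved product $\prod_k SC(\alpha_k)Q(\beta_k)$ and rearranges it \emph{globally}, using the operator identities \eqref{C=ARA}--\eqref{ASA0} together with the commutation rule $SB_1SB_2S=B_2SB_1$ (Lemma~\ref{lemmasabs}), to collapse $\ket{\Psi_h}$ into two clean blocks of $A$-rotations, e.g.\ $S[A(\eta_h)\cdots A(\eta_1)]R(\alpha_1)SR(\beta_h)[A(\zeta_h)\cdots A(\zeta_1)]\ket{\bar 0}$ for odd $h$ (Appendix~\ref{AppendixC}); only then does it run the Chebyshev recurrence on each block. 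You instead exploit the bipartite alternation of a single step: $\mathrm{span}(A,B)\to\mathrm{span}(C,D)$ by a diagonal phase matrix and $\mathrm{span}(C,D)\to\mathrm{span}(A,B)$ by the Grover block, so the evolution splits into two independent $2\times 2$ sequences of Yoder--Low--Chuang type, to each of which Lemma~\ref{lemma1} applies directly. This is sound and arguably more elementary; in particular, the paper's nonlocal parameter pairing $\alpha_k=-\beta_{h+2-k}$, $\alpha_k=-\beta_{h-k}$ in Eq.~\eqref{alphabetaOddh}, which there falls out of the global rearrangement, emerges in your picture simply because the $\alpha$'s at even steps and the $\beta$'s at odd steps belong to one channel (and vice versa), and the YLC reversal-pairing is imposed inside each channel separately; the paper's route, on the other hand, buys reusability, since the identical algebra is recycled verbatim for the $8$-dimensional two-sided case (Appendix~\ref{AppendixD}). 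Two corrections to your write-up. First, for odd $h$ there is no ``single recurrence'': both parities have two channels, each producing a degree-$h$ Chebyshev failure amplitude (the $(a,b)$-channel and the $(c,d)$-channel), and the failure weight is the equal-weight average of their squares, which equals $\epsilon T_h^2$ only because the two channels share the same schedule and hence the same value; your even-$h$ description (lengths $h\pm 1$ with weight $\tfrac12$ each) is in fact the correct template for the odd case as well. Incidentally, this two-channel view also explains transparently why $\alpha_1$ and $\beta_h$ are free: $G(\alpha_1)$ acts as a pure phase on the channel's initial vector $\ket{s_s}$, and $\beta_h$ only dresses the marked component, which does not enter the failure weight. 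Second, the phase bookkeeping you defer as ``the crux'' is precisely the content of the paper's Appendix~\ref{AppendixC}; in your route it is lighter---each channel is verbatim a YLC sequence evaluated at $x=\sqrt{1-n_l/N_l}$---but it still has to be carried out to confirm that the increments $\zeta_{k+1}-\zeta_k$ land exactly on the schedule of Lemma~\ref{lemma1}, so your proof is a valid plan rather than a complete argument as it stands.
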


\begin{lemma}\label{lemma3}
 In Algorithm~\ref{algorithm}, suppose that  the marked vertices are in both the left and right sides. There exists a  sequence  of parameters $\alpha_i, \beta_i$, such that the success probability satisfies
\begin{align*}P_h= 1-&\epsilon^2 T_h^2(\cos(\frac{1}{h}\arccos(\frac{1}{\sqrt{\epsilon}}))\sqrt{1-\frac{n_l}{N_l}}))\\
&\times T_h^2(\cos(\frac{1}{h}\arccos(\frac{1}{\sqrt{\epsilon}}))\sqrt{1-\frac{n_r}{N_r}}))\label{phTwoSet-odd}
\end{align*}
for odd $h$, and
\begin{align*}
P_h &=1-\frac{\epsilon^2}{2}\bigg[ T_{h+1}^2\left(\cos\left(\frac{1}{h+1}\arccos(\frac{1}{\sqrt{\epsilon}})\right)\sqrt{1-\frac{n_l}{N_l}}\right)\\
  & \times T_{h-1}^2\Bigg(\cos\left(\frac{1}{h-1} \arccos(\frac{1}{\sqrt{\epsilon}})\right)\sqrt{1-\frac{n_r}{N_r}}\Bigg)\\
  & +T_{h+1}^2\left(\cos(\frac{1}{h+1}\arccos(\frac{1}{\sqrt{\epsilon}}))\sqrt{1-\frac{n_r}{N_r}}\right)\\
   & \times T_{h-1}^2\Bigg(\cos(\frac{1}{h-1}\arccos(\frac{1}{\sqrt{\epsilon}}))\sqrt{1-\frac{n_l}{N_l}}\Bigg)\bigg]
\end{align*}
for even $h$.
\end{lemma}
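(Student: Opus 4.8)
The plan is to mirror the strategy behind Lemma~\ref{lemma2}, but now in the larger invariant subspace forced by having marked vertices on \emph{both} sides, and then to extract the product structure that distinguishes the two-sided case. First I would exploit the symmetry of the complete bipartite graph: since all marked-left vertices are interchangeable, and likewise for unmarked-left, marked-right and unmarked-right vertices, both the initial state $\ket{\Psi_0}$ and every operator $S$, $C(\alpha)$, $Q(\beta)$ preserve the subspace spanned by the symmetrized (uniform-superposition) basis states indexed by the type of the position register and the type of the coin register. Because neighbors of a left vertex are right vertices and vice versa, the position type and coin type must lie on opposite sides, leaving eight symmetrized basis states: $\{u\text{-}t,\,u\text{-}s,\,v\text{-}t,\,v\text{-}s\}$ with left position and $\{t\text{-}u,\,t\text{-}v,\,s\text{-}u,\,s\text{-}v\}$ with right position, in the notation of Fig.~\ref{fig:1c}. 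I would write $S$, $C(\alpha_k)$ and $Q(\beta_k)$ as explicit matrices on this eight-dimensional space, noting that $S$ swaps the left-position block with the right-position block while $C$ and $Q$ act block-diagonally within each side.

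Second, I would apply Lemma~\ref{lemmasabs} to collapse this eight-dimensional state tracking into a manageable form. The key point, specific to the two-sided case, is that under the prescribed parameter schedule the reduced dynamics decouples into two independent channels, one governed by the ratio $n_l/N_l$ and one by $n_r/N_r$, each obeying a recurrence of exactly the Quasi-Chebyshev type of Lemma~\ref{lemma1}. Concretely, I expect the amplitude on the failure states $v\text{-}s$ and $s\text{-}v$ (position unmarked \emph{and} coin unmarked, the only way to miss a marked vertex under the two-register measurement) to factor as a product of a ``left'' amplitude and a ``right'' amplitude.

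Third, with the parameter choices of Eq.~\eqref{alphabetaOddh} for odd $h$, Lemma~\ref{lemma1} gives each channel amplitude as $\sqrt{\epsilon}\,T_h\!\big(\cos(\tfrac1h\arccos\tfrac1{\sqrt\epsilon})\sqrt{1-n_l/N_l}\big)$ and the analogous right-side expression, using $T_h(1/\gamma)=1/\sqrt\epsilon$. Squaring and multiplying the two channel contributions yields the failure probability $\epsilon^2\,T_h^2(\cdots)\,T_h^2(\cdots)$, hence the stated $P_h$ for odd $h$. For even $h$ I would use the interleaved schedule of Eq.~\eqref{alphabetaEvenh}, which splits each channel into an $(h+1)$-length piece governed by $\gamma_1$ and an $(h-1)$-length piece governed by $\gamma_2$; combining these with the two failure states produces the symmetric sum of the two cross-products $T_{h+1}^2(\cdots)\,T_{h-1}^2(\cdots)$ with prefactor $\tfrac{\epsilon^2}{2}$. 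A useful consistency check is that setting $n_r=0$ forces $\sqrt{1-n_r/N_r}=1$, so $T_h(1/\gamma)=1/\sqrt\epsilon$ and the product collapses to the single-channel expression of Lemma~\ref{lemma2}.

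The main obstacle will be Step~2: establishing that the eight-dimensional two-sided dynamics genuinely decouples into two independent Quasi-Chebyshev channels. In Grover's robust search one works in an essentially two-dimensional space and no such decoupling question arises; here the higher dimension and the interplay between the flip-flop shift, which mixes the two sides, and the side-local coin and query operators make it nontrivial to see that the failure amplitude factorizes. This is exactly the role of Lemma~\ref{lemmasabs}, and verifying that the parameter schedule respects the factorization—especially in the even-$h$ case where two different values of $\gamma$ are interleaved—is the delicate part of the argument.
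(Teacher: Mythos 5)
Your proposal is correct and follows essentially the same route as the paper's own proof (Appendix~\ref{AppendixB}): the identical eight-dimensional symmetrized subspace with block-diagonal $C,Q$ and block-swapping $S$, the eight-dimensional analogue of Lemma~\ref{lemmasabs} to reduce $\ket{\Psi_h}$, the decoupling into two independent Quasi-Chebyshev channels (one in $\sqrt{1-n_l/N_l}$, one in $\sqrt{1-n_r/N_r}$) handled by Lemma~\ref{lemma1}, and the same parameter schedules producing the product failure amplitudes $\epsilon^2 T_h^2\,T_h^2$ for odd $h$ and the symmetric sum of cross-products $T_{h+1}^2\,T_{h-1}^2$ with prefactor $\tfrac{\epsilon^2}{2}$ for even $h$. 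Your consistency check at $n_r=0$ and your identification of the factorization of the two failure amplitudes as the delicate step are both sound and consistent with the paper's stage-by-stage computation.
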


\section{Method}\label{method}
The section is devoted to the proof of Lemma \ref{lemma2}. As shown in Fig.~\ref{fig:1b}, vertices can be classified into three types: the marked vertices denoted by $u$ in the left, the unmarked vertices denoted by $v$ in the left and $s$ in the right. Therefore, our analysis can be simplified in a four-dimensional subspace with the  orthogonal basis $\{\ket{us}, \ket{su}, \ket{sv}, \ket{vs} \}$ given below:
\begin{align*}
&\ket{us}=\frac{1}{\sqrt{n_l}}\sum_u\ket{u}\otimes\frac{1}{\sqrt{N_r}}\sum_s\ket{s}, \\ &\ket{sv}=\frac{1}{\sqrt{N_r}}\sum_s\ket{s}\otimes\frac{1}{\sqrt{N_L-n_l}}\sum_v\ket{v}, \\
&\ket{su}=\frac{1}{\sqrt{N_r}}\sum_s\ket{s}\otimes\frac{1}{\sqrt{n_l}}\sum_u\ket{u},  \\ &\ket{vs}=\frac{1}{\sqrt{N_l-n_l}}\sum_v\ket{v}\otimes\frac{1}{\sqrt{N_r}}\sum_s\ket{s}.
\end{align*}

Note that $\ket{\Psi_0}$ can be rewritten in the above basis as
$\ket{\Psi_0}=\frac{1}{\sqrt{2N_lN_r}}\Big[\sqrt{n_lN_r}\ket{us}+\sqrt{n_lN_r}\ket{su}+ \sqrt{N_r(N_l-n_l)}\ket{sv}+
    \sqrt{N_r(N_l-n_l)}\ket{vs}\Big].$

Hence,  it can be expressed as a $4$-dimensional vector
$$\ket{\Psi_0}= \frac{1}{\sqrt{2N_lN_r}}\left(
    \begin{array}{c}
        \sqrt{n_lN_r} \\
        \sqrt{n_lN_r} \\
        \sqrt{N_r(N_l-n_l)} \\
       \sqrt{N_r(N_l-n_l)}\\
    \end{array}
    \right).$$
Furthermore, we have
\begin{align*}
    S=\left(
    \begin{array}{cccc}
        0 & 1 & 0 & 0 \\
        1 & 0 & 0 & 0 \\
        0 & 0 & 0 & 1 \\
        0 & 0 & 1 & 0 \\
    \end{array}
    \right),~~
    Q(\beta)=\left(
    \begin{array}{cccc}
        e^{i\beta} & 0 & 0 & 0 \\
        0 & 1 & 0 & 0 \\
        0 & 0 & 1 & 0 \\
        0 & 0 & 0 & 1 \\
    \end{array}
    \right),
\end{align*}
and
\begin{equation*}
    C(\alpha)=\left(
    \begin{array}{cccc}
        -e^{-i\alpha} & 0 & 0 & 0 \\
        0 & C_{22}& \frac{(1-e^{-i\alpha})\sin(\omega)}{2} & 0 \\
        0 & \frac{(1-e^{-i\alpha})\sin(\omega)}{2} & C_{33} & 0 \\
        0 & 0 & 0 & -e^{-i\alpha} \\
    \end{array}
    \right)
\end{equation*}
with $\omega= \arccos (1-\frac{2n_l}{N_l})$,
where  $C_{23}=\frac{(1-e^{-i\alpha})(1-\cos(\omega))}{2}-1$, $ C_{33} =\frac{(1-e^{-i\alpha})(1+\cos(\omega))}{2}-1,$
and
$\cos(\omega)=1-\frac{2n_l}{N_l}$, $\sin(\omega)=\frac{2}{N_L}\sqrt{n_l*(N_l-n_l)}$.

Now some key results are given as follows:
Let
\begin{equation*}
    R(\theta)=-\left(
    \begin{array}{cccc}
        e^{-\frac{i\theta}{2}} & 0 & 0 & 0 \\
        0 & e^{\frac{i\theta}{2}} & 0 & 0 \\
        0 & 0 & e^{-\frac{i\theta}{2}} & 0 \\
        0 & 0 & 0 & e^{-\frac{i\theta}{2}} \\
    \end{array}
    \right),
\end{equation*}
and
\begin{equation*}
    A(\theta)=\left(
    \begin{array}{cccc}
        1 & 0 & 0 & 0 \\
        0 & \cos{(\frac{\omega}{2})} & -ie^{i\theta}\sin{(\frac{\omega}{2})} & 0 \\
        0 & -ie^{-i\theta}\sin{(\frac{\omega}{2})} &  \cos{(\frac{\omega}{2})} & 0 \\
        0 & 0 & 0 & 1 \\
    \end{array}
    \right)
\end{equation*}
with $\omega= \arccos (1-\frac{2n_l}{N_l})$.
One can verify the following identities:
\begin{align}
  &C(\alpha)=e^{-\frac{i\alpha}{2}}A(\frac{\pi}{2})R(\alpha)A(-\frac{\pi}{2}) \label{C=ARA},
  \\&Q(\beta)S=-e^{i\frac{\beta}{2}}SR(\beta) \label{QS=SR},
  \\ &A(\alpha+\beta)=R(\beta)A(\alpha)R(-\beta)\label{A=RAR}, \\&R(\theta)R(-\theta)=I,\label{RR=I}\\
 &\ket{\Psi_0} =A(\frac{\pi}{2})SA(\frac{\pi}{2})\ket{\bar{0}} \label{ASA0},
\end{align}
where $\ket{\bar{0}}$ denotes $ (0,0,\frac{1}{\sqrt{2}},\frac{1}{\sqrt{2}})^T$.
Another crucial observation is the following lemma, which will be useful later:
\begin{lemma} \label{lemmasabs}
\begin{equation}\label{SABSCDS=CDSAB}
     SB_1SB_2S=B_2SB_1,
\end{equation}
where $B_1=\prod_{i=0}^n D_i$ and $B_2=\prod_{i=0}^m D_i$ for $D_i \in \{A(\theta_i),R(\theta_i)\}$.
\end{lemma}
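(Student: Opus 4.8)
The plan is to reduce the identity to a commutation statement and then verify that statement on generators using the block structure of the matrices. First I would right-multiply both sides of \eqref{SABSCDS=CDSAB} by $S$ and use $S^2=I$ (immediate from the swap form of $S$) to obtain the equivalent identity $SB_1SB_2=B_2SB_1S$. Grouping the factors as $(SB_1S)B_2=B_2(SB_1S)$, this says precisely that the conjugate $\tilde B_1:=SB_1S$ commutes with $B_2$. So it suffices to prove $[\,SB_1S,\,B_2\,]=0$.

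Next I would observe that conjugation by the involution $S$ is a homomorphism: inserting $S^2=I$ between consecutive factors gives $S(D_1\cdots D_n)S=(SD_1S)\cdots(SD_nS)$. Hence $\tilde B_1$ is a product of the swapped generators $SA(\theta_i)S$ and $SR(\theta_i)S$, while $B_2$ is a product of the original generators $A(\phi_j),R(\phi_j)$. By the usual factor-by-factor argument (moving $B_2$ past one generator of $\tilde B_1$ at a time), to conclude that $\tilde B_1$ commutes with $B_2$ it is enough to show that each swapped generator commutes with each original generator, which is four cases in total.

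The heart of the proof is then a block-structure computation. Conjugating by $S$, which swaps $e_1\!\leftrightarrow\!e_2$ and $e_3\!\leftrightarrow\!e_4$, moves the off-diagonal $2\times2$ rotation block of $A(\theta)$ from the coordinates $\{e_2,e_3\}=\{\ket{su},\ket{sv}\}$ onto the coordinates $\{e_1,e_4\}=\{\ket{us},\ket{vs}\}$, so $SA(\theta)S$ acts nontrivially only on $\mathrm{span}\{e_1,e_4\}$ and as the identity on $\mathrm{span}\{e_2,e_3\}$, while $SR(\theta)S$ stays diagonal. Two features then make all four commutators vanish. First, $A(\phi)$ is the identity off $\{e_2,e_3\}$ and $SA(\theta)S$ is the identity off $\{e_1,e_4\}$, so these two operators act on complementary invariant subspaces and commute. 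Second, the diagonal factor $R(\phi)$ restricts to the scalar $-e^{-i\phi/2}I$ on $\{e_1,e_4\}$ (its entries $1$ and $4$ coincide), while $SR(\theta)S$ restricts to the scalar $-e^{-i\theta/2}I$ on $\{e_2,e_3\}$ (its entries $2$ and $3$ coincide); a scalar commutes with anything on the subspace where it lives, and both operators are the identity on the complementary subspace, so $SA(\theta)S$ commutes with $R(\phi)$ and $SR(\theta)S$ commutes with $A(\phi)$. Two diagonal matrices trivially commute, settling the $SR(\theta)S$–$R(\phi)$ case.

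The step I expect to be the main obstacle is pinning down the ``scalar off its block'' property with the correct coordinate alignment: it works only because $R(\theta)$ is built so that three of its four diagonal entries (indices $1,3,4$) coincide, which is exactly what makes $R(\phi)$ scalar on $\{e_1,e_4\}$ and $SR(\theta)S$ scalar on $\{e_2,e_3\}$. Once this coincidence is verified, the four commutation checks are routine, and stringing them together through the factor-by-factor argument yields $[\,SB_1S,\,B_2\,]=0$, hence \eqref{SABSCDS=CDSAB}.
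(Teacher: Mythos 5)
Your proof is correct, and it rests on the same structural facts as the paper's, but it is organized quite differently. The paper's proof works at the level of the full products: it computes ``by calculation'' that $B_1$ and $B_2$ are block-diagonal with respect to the splitting $\{e_1\},\{e_2,e_3\},\{e_4\}$ and then asserts $SB_1SB_2S=B_2SB_1$ in a single step, leaving the reader to see why that form forces the identity. You instead reduce the identity (via $S^2=I$) to the commutation statement $[SB_1S,B_2]=0$, use that conjugation by $S$ is a homomorphism, and then verify four generator-pair commutations; this makes explicit the mechanism the paper leaves implicit, namely that $SA(\theta)S$ and $A(\phi)$ act on the complementary invariant subspaces $\mathrm{span}\{e_1,e_4\}$ and $\mathrm{span}\{e_2,e_3\}$, while the diagonal factors restrict to scalars exactly where the other family acts nontrivially. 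Your version is also more careful on one point: the paper displays $B_1,B_2$ with $1$'s in the corner entries, which is literally false once $R$-factors occur (each contributes $-e^{-i\theta_i/2}$ there); what the argument actually needs, and what your scalar-restriction step supplies, is only that the $(1,1)$ and $(4,4)$ entries agree. (Minor quibble: you attribute this to the coincidence of entries $1,3,4$ of $R$; only the equality of entries $1$ and $4$ is used.) In short, the paper's route buys brevity, yours buys a verifiable factor-by-factor argument that also transfers verbatim to the $8$-dimensional analogue used for Lemma~\ref{lemma3}.
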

\begin{proof}
By calculation, the form of $B_1$ and $B_2$  is as follows:
\begin{equation*}
    B_1=\left(
    \begin{array}{cccc}
        1 & 0 & 0 & 0 \\
        0 & B_1(22) & B_1(23) & 0 \\
        0 & B_1(32) &  B_1(33) & 0 \\
        0 & 0 & 0 & 1 \\
    \end{array}
    \right),
\end{equation*}
and
\begin{equation*}
    B_2=\left(
    \begin{array}{cccc}
        1 & 0 & 0 & 0 \\
        0 & B_2(22) & B_2(23) & 0 \\
        0 & B_2(32) &  B_2(33) & 0 \\
        0 & 0 & 0 & 1 \\
    \end{array}
    \right),
\end{equation*}
where $B_1(\cdot)$ and $B_2(\cdot)$ are mathematical expressions of $\theta$ and $\omega$.
Hence, we have $SB_1SB_2S=B_2SB_1$.
\end{proof}


Below we will prove Lemma \ref{lemma2} by two cases.

\noindent\textbf{Case 1: $h$ is an odd integer.}
First we set
\begin{equation}\label{oneSideOddA1}
    \alpha_k=
    \begin{cases}
    -\beta_{h+2-k}& \text{ $k=2,4,\dots,h-1$},\\
-\beta_{h-k}& \text{ $k=3,5,\dots,h$}.
    \end{cases}
\end{equation}
Then $\ket{\Psi_h}$ reduces  to
\begin{align}\label{oneSideOddreduced}
    \ket{\Psi_h}\sim
     S\big[A(\eta_h)...A(\eta_1)\big]R(\alpha_1)
   SR(\beta_h)\big[A(\zeta_h)...A(\zeta_1)\big]\ket{\bar{0}},
\end{align}
which will be proven in Appendix \ref{AppendixC} by using Eqs.~\eqref{C=ARA}-\eqref{SABSCDS=CDSAB}.
Here $\eta_{k}=\eta_{h+1-k}$ and $\zeta_{k}=\zeta_{h+1-k}$ for $k=1,2,\dots,h$,
and
\begin{equation}\label{oneSetOddEtaAlpha}
    \eta_{k+1}-\eta_{k}=
    \begin{cases}
    \pi - \alpha_{k+1} & \text{ $k=2,4,\dots,h-1$},\\
    -\pi+\alpha_{h-k+1} & \text{ $k=3,5,\dots,h$},
    \end{cases}
\end{equation}
\begin{equation}\label{oneSetOddZetaAlpha}
    \zeta_{k+1}-\zeta_{k}=
    \begin{cases}
    \pi - \alpha_{k} & \text{ $k=2,4,\dots,h-1$},\\
    -\pi+\alpha_{h-k} & \text{ $k=3,5,\dots,h$}.
    \end{cases}
\end{equation}

Let us have a more detailed analysis at the state evolution in Eq.~\eqref{oneSideOddreduced}, which can be divided into four stages as follows.

\begin{small}
\begin{align*}
\frac{1}{\sqrt{2}}
    \left(
    \begin{array}{c}
        0 \\
        0 \\
        1 \\
        1 \\
    \end{array}
    \right)
    &\xrightarrow[\textcircled 1]{A(\zeta_h)...A(\zeta_1)}
 \frac{1}{\sqrt{2}}
  \left(
    \begin{array}{c}
        0 \\
      b_h(x)\\
        c_h(x)\\
        1 \\
    \end{array}
    \right)
    \\
 &\xrightarrow [\textcircled 2]{\quad R(\alpha_1)SR(\beta_h)} \frac{1}{\sqrt{2}}
  \left(
    \begin{array}{c}
        e^{i\beta_h} b_h(x)\\
        0 \\
        1\\
         c_h(x)\\
    \end{array}
    \right)
\\
&\xrightarrow [\textcircled 3] {A(\eta_h)...A(\eta_1)}
\frac{1}{\sqrt{2}}
    \left(
    \begin{array}{c}
        e^{i\beta_h}b_h(x)\\
        \bar{b}_h(x) \\
        \bar{c}_h(x) \\
        c_h(x) \\
    \end{array}
    \right)
    \\
&\xrightarrow [\quad \textcircled 4] {\quad \quad S \quad \quad}
\frac{1}{\sqrt{2}}
 \left(
    \begin{array}{c}
        \bar{b}_h(x)\\
        e^{i\beta_h}b_h(x) \\
         c_h(x)\\
        \bar{c}_h(x) \\
    \end{array}
    \right).
\end{align*}
\end{small}

\noindent {\bf Stage $\textcircled 1$:  apply $A(\zeta_h)...A(\zeta_1)$  to the initial state.} Let
$(a_0, b_0, c_0, d_0)=(0,0,1,1)$ and
\begin{equation*}
   \ket{\mu_k}=
   \left(
    \begin{array}{c}
        a_k \\
        b_k \\
        c_k \\
        d_k \\
    \end{array}
    \right)=A(\zeta_k)...A(\zeta_1)
       \left(
    \begin{array}{c}
        0 \\
        0 \\
        1 \\
        1 \\
    \end{array}
    \right)
\end{equation*}
for $k=1,2,\dots,h.$ First note that $A(\zeta_i)$ makes effect only to the $2$th and $3$th dimension of a $4$-dimensional vector.  Thus,  $a_k=0$ and $d_k=1$ for all $k$. Furthermore,  we have
\begin{align}\label{uk}
    \ket{\mu_{k}}&=A(\zeta_k)\ket{\mu_{k-1}}\\ \nonumber
    &=\left(
        \begin{array}{c}
           0\\
            b_{k-1}\cos(\frac{\omega}{2})-ic_{k-1}e^{i\zeta_k}\sin(\frac{\omega}{2}) \\
            -ib_{k-1}e^{-i\zeta_k}\sin(\frac{\omega}{2})+c_{k-1}\cos(\frac{\omega}{2}) \\
           1 \\
        \end{array}
        \right),
\end{align}
and
\begin{align}\label{uk1}
    \ket{\mu_{k-2}}&=A(\zeta_{k-1})^{-1}\ket{\mu_{k-1}} \\ \nonumber
    &=\left(
        \begin{array}{c}
            0 \\
            b_{k-1}\cos(\frac{\omega}{2})+ic_{k-1}e^{i\zeta_{k-1}}\sin(\frac{\omega}{2}) \\
            ib_{k-1}e^{-i\zeta_{k-1}}\sin(\frac{\omega}{2})+c_{k-1}\cos(\frac{\omega}{2}) \\
           1\\
        \end{array}
        \right).
\end{align}
Combined with Eqs.~\eqref{uk} and~\eqref{uk1}, we have
\begin{equation*}
    c_k=-ib_{k-1}e^{-i\zeta_k}\sin(\frac{\omega}{2})+c_{k-1}\cos(\frac{\omega}{2}),
\end{equation*}
\begin{equation*}
    c_{k-2}=ib_{k-1}e^{-i\zeta_{k-1}}\sin(\frac{\omega}{2})+c_{k-1}\cos(\frac{\omega}{2}).
\end{equation*}
The recurrence formula of $c_k(x)$ is defined by
$c_0(x)=1,c_1(x)=x$, and for $k=2,...,h$,
\begin{equation*}
    c_k(x)=x(1+e^{-i(\zeta_k-\zeta_{k-1})})c_{k-1}(x)-e^{-i(\zeta_k-\zeta_{k-1})}c_{k-2}(x),
\end{equation*}
with $x=\cos(\frac{\omega}{2})$. By Lemma \ref{lemma1}, when
\begin{equation}\label{recurrence1a}
    \zeta_{k+1}-\zeta_{k}=(-1)^k\pi-2\arccot(\tan(\frac{k\pi}{h})\sqrt{1-\gamma^{2}})
\end{equation}
for $k=1,\dots,h-1$, where $\gamma^{-1}=\cos(\frac{1}{h}\arccos(\frac{1}{\sqrt{\epsilon}}))$, we have
\begin{equation*}
    c_h(x)=\frac{T_h(\frac{x}{\gamma})}{T_h(\frac{1}{\gamma})}
\end{equation*}
with $T_h(\frac{1}{\gamma})=\frac{1}{\sqrt{\epsilon}}$. Moreover, $b_h(x)$ is determined by $|b_h(x)|^2+|c_h(x)|^2=1$. Therefore, the state after $A(\zeta_h)...A(\zeta_1)$ applied to the initial state is
\begin{equation*}
\frac{1}{\sqrt{2}}
  \left(
    \begin{array}{c}
        0 \\
      b_h(x)\\
        c_h(x)\\
        1 \\
    \end{array}
    \right).
\end{equation*}

\noindent {\bf Stage $\textcircled 2$: apply $R(\alpha_1)SR(\beta_h)$ to the above state.} After that,
 the state is
\begin{equation*}
\frac{1}{\sqrt{2}}
  \left(
    \begin{array}{c}
        e^{i\beta_h} b_h(x)\\
        0 \\
        1\\
         c_h(x)\\
    \end{array}
    \right).
\end{equation*}

\noindent {\bf Stage $\textcircled 3$: perform $A(\eta_h)...A(\eta_1)$.}
Let
\begin{equation*}
    \left(
    \begin{array}{c}
        \bar{a}_k\\
        \bar{b}_k \\
        \bar{c}_k\\
        \bar{d}_k\\
    \end{array}
    \right)
    =A(\eta_k)...A(\eta_1)
  \left(
    \begin{array}{c}
        e^{i\beta_h} b_h(x)\\
        0 \\
        1\\
         c_h(x)\\
    \end{array}
    \right).
\end{equation*}
for $k=1,2,\dots,h.$
By the property of the matrix $A(\eta_{i})$, we have $\bar{a}_k=e^{i\beta_h} b_h(x)$
and $\bar{d}_k=c_h(x)$ for all $k$.
The recurrence formula of $\bar{c}_k(x)$ is defined by $\bar{c}_0(x)=1,\bar{c}_1(x)=x$ and for $k=2,...,h$,
\begin{equation*}
    \bar{c}_k(x)=x(1+e^{-i(\eta_k-\eta_{k-1})})\bar{c}_{k-1}(x)-e^{-i(\eta_k-\eta_{k-1})}\bar{c}_{k-2}(x),
\end{equation*}
with $x=\cos(\frac{\omega}{2})$.
By Lemma \ref{lemma1}, when
\begin{equation}\label{recurrence2}
    \eta_{k+1}-\eta_{k}=(-1)^k\pi-2cot^{-1}(\tan(\frac{k\pi}{h})\sqrt{1-\gamma^{2}})
\end{equation}
for $k=1,\dots,h-1$, where $\gamma^{-1}=\cos(\frac{1}{h}\arccos(\frac{1}{\sqrt{\epsilon}}))$, we have
\begin{equation*}
    \bar{c}_h(x)=\frac{T_h(\frac{x}{\gamma})}{T_h(\frac{1}{\gamma})}
\end{equation*}
with $T_h(\frac{1}{\gamma})=\frac{1}{\sqrt{\epsilon}}$.
Moreover, $\bar{b}_h(x)$ is determined by $|\bar{b}_h(x)|^2+|\bar{c}_h(x)|^2=1$.
Hence, the result state is
\begin{equation*}
\frac{1}{\sqrt{2}}
    \left(
    \begin{array}{c}
        e^{i\beta_h}b_h(x)\\
        \bar{b}_h(x) \\
        \bar{c}_h(x) \\
        c_h(x) \\
    \end{array}
    \right).
\end{equation*}

\noindent {\bf Stage $\textcircled 4$: perform  the final operation $S$.}
The  final state is
\begin{equation*}
\frac{1}{\sqrt{2}}
    \left(
    \begin{array}{c}
        \bar{b}_h(x)\\
        e^{i\beta_h}b_h(x) \\
         c_h(x)\\
        \bar{c}_h(x) \\
    \end{array}
    \right).
\end{equation*}

Therefore, the success probability $P_{h}$ is
\begin{align*}
        P_{h}&=1-\frac{1}{2}(|c_h(x)|^2+|\bar{c_h}(x)|^2) = 1- \epsilon T_h^2(\frac{x}{\gamma}) \\
    &= 1-\epsilon T_h^2(\cos(\frac{1}{h}\arccos(\frac{1}{\sqrt{\epsilon}}))\sqrt{1-\frac{n_l}{N_l}}).
\end{align*}
By Eqs.~\eqref{oneSideOddA1},~\eqref{oneSetOddEtaAlpha},
~\eqref{oneSetOddZetaAlpha},~\eqref{recurrence1a}  and~\eqref{recurrence2}, $\alpha_k, \beta_k$ can be chosen such that
\begin{align*}
    \alpha_k&=-\beta_{h+2-k}=\pi+(\eta_{k+1}-\eta_{k})\\
    &=2\arccot(\tan(\frac{k\pi}{h})\sqrt{1-\gamma^2})
\end{align*}
for $k=2,4,\dots,h-1$, and
\begin{align*}
    \alpha_k&=-\beta_{h-k}=\pi-(\zeta_{k+1}-\zeta_{k})\\
    &=2\arccot(\tan(\frac{(k-1)\pi}{h})\sqrt{1-\gamma^2})
\end{align*}
for $k=3,5,\dots,h$. In addition, $\alpha_1$ and $\beta_h$ can be any value.

\noindent\textbf{Case 2: $h$ is an even integer.} The proof is given in Appendix \ref{AppendixA}.

\section{Conclusion \& outlook}
In this paper,  we  investigated how to overcome the souffl\'{e} problem of  quantum walk search. We presented a robust quantum walk-based algorithm for searching a marked vertex on a complete bipartite graph. The algorithm  need not  know any prior information about the marked vertices (e.g., the number of marked vertices), but  keeps   a   quadratic speedup over classical search algorithms and ensures that the error is bounded by a tunable parameter $\epsilon$.

We have just initiated the first step towards robust quantum walk search. More questions are worthy of further consideration.  For example, several interesting questions are listed below:
\begin{itemize}
 \item Can the robustness feature be introduced into quantum walk search on other graphs?
  \item For the framework of searching a marked state in  Markov chains, can we propose a robust version?
  \item Another interesting direction would be to explore some important
problems in  practical scenarios by
robust quantum walk search algorithms.
\end{itemize}

We will try to address these questions in forthcoming works.

\section*{Acknowledgments}
This work was supported by the National Natural Science Foundation of China (Grant Nos. 62272492, 61772565), the Guangdong Basic and Applied Basic Research Foundation (Grant No. 2020B1515020050).

\bibliography{refs}
\appendix
\begin{widetext}
\newpage
\section{Case 2 in the proof of Lemma \ref{lemma2}: $h$ is an even integer} \label{AppendixA}

First we set
\begin{equation}\label{oneSideEvenA1}
    \beta_k=
    -\alpha_{h+1-k} \text{ \quad \quad $k=1,2,\dots,h-1$}.
\end{equation}
Then $\ket{\Psi_h}$ reduces to

\begin{align}\label{oneSideEvenreduced}
    \ket{\Psi_h} \sim &R(\beta_h)\big[A(\phi_{h-1})...A(\phi_1)\big]
 R(\alpha_1)S\big[A(\psi_{h+1})...A(\psi_1)\big]\ket{\bar{0}},
\end{align}
which will be proven in Appendix \ref{AppendixC} by using Eqs.~\eqref{C=ARA}-\eqref{SABSCDS=CDSAB}.
Here $\phi_{k}=\phi_{h-k}$ for $k=1,2,\dots,h-1$, $\psi_{k}=\psi_{h+2-k}$
for $k=1,2,\dots,h+1$,
and
\begin{equation}\label{oneSideEvenA2}
    \phi_{k+1}-\phi_{k}=
    \begin{cases}
    \pi - \alpha_{k+1} & \text{ $k=2,4,\dots,h-2$},\\
    -\pi+\alpha_{h-k} & \text{ $k=1,3,\dots,h-3$},
    \end{cases}
\end{equation}
\begin{equation}\label{oneSideEvenA3}
    \psi_{k+1}-\psi_{k}=
    \begin{cases}
    \pi - \alpha_{k} & \text{$k=2,4,\dots,h-2$},\\
    -\pi+\alpha_{h-k+1} & \text{ $k=1,3,\dots,h$}.
    \end{cases}
\end{equation}
Similar to Eq.~\eqref{oneSideOddreduced}, the final state of Eq.~\eqref{oneSideEvenreduced} can be obtained by the following four stages:
\begin{small}
 \begin{align*}
     \frac{1}{\sqrt{2}}
    \left(
    \begin{array}{c}
        0 \\
        0 \\
        1 \\
        1 \\
    \end{array}
    \right)
      & \xrightarrow [\textcircled 1]{A(\psi_{h+1})...A(\psi_1)}
 \frac{1}{\sqrt{2}}
    \left(
    \begin{array}{c}
        0\\
        b_{h+1}(x) \\
        c_{h+1}(x) \\
        1 \\
    \end{array}
    \right)
\xrightarrow [\textcircled 2]{R(\alpha_1)S}
\frac{1}{\sqrt{2}}
    \left(
    \begin{array}{c}
        b_{h+1}(x)\\
         0\\
        1 \\
        c_{h+1}(x) \\
    \end{array}
    \right)
\\
&\xrightarrow [\textcircled 3]{A(\phi_{h-1})...A(\phi_1)}
\frac{1}{\sqrt{2}}
    \left(
    \begin{array}{c}
        b_{h+1}(x)\\
         \bar{b}_{h-1}(x)\\
        \bar{c}_{h-1}(x) \\
        c_{h+1}(x) \\
    \end{array}
    \right)
\xrightarrow [\textcircled 4]{R(\beta_h)}
\frac{1}{\sqrt{2}}
    \left(
    \begin{array}{c}
        b_{h+1}(x)\\
        e^{i\beta_h} \bar{b}_{h-1}(x)\\
        \bar{c}_{h-1}(x) \\
        c_{h+1}(x) \\
    \end{array}
    \right)
.
 \end{align*}
 \end{small}

\noindent {\bf Stage $\textcircled 1$:  apply $A(\psi_{h+1})...A(\psi_1)$  to the initial state.} Let
$(a_0, b_0, c_0, d_0)=(0,0,1,1)$ and
\begin{equation*}
 \left(
    \begin{array}{c}
        a_k \\
        b_k \\
        c_k \\
        d_k \\
    \end{array}
    \right)=A(\psi_k)...A(\psi_1)
     \left(
    \begin{array}{c}
        0 \\
        0 \\
        1 \\
        1 \\
    \end{array}
    \right)
\end{equation*}
for $k=1,2,\dots,h+1.$ By the property of  matrix $A(\psi_i)$, we have $a_k=0$ and $d_k=1$ for all $k$.
 The recurrence formula of $c_k(x)$ is defined by $c_0(x)=1,c_1(x)=x$ and for $k=2,...,h+1$,
\begin{equation*}
    c_k(x)=x(1+e^{-i(\psi_k-\psi_{k-1})})c_{k-1}(x)-e^{-i(\psi_k-\psi_{k-1})}c_{k-2}(x),
\end{equation*} with $x=\cos(\frac{\omega}{2})$.
By Lemma \ref{lemma1}, when
\begin{equation}\label{oneSideEvenA4}
    \psi_{k+1}-\psi_{k}=(-1)^k\pi-2\arccot(\tan(\frac{k\pi}{h+1})\sqrt{1-\gamma_1^{2}})
 \end{equation}
 with $\gamma^{-1}_1=\cos(\frac{1}{h+1}\arccos(\frac{1}{\sqrt{\epsilon}}))$ for $k=1,2,\dots,h$,
we have
 \begin{equation*}
     c_{h+1}(x)=\frac{T_{h+1}(\frac{x}{\gamma_1})}{T_{h+1}(\frac{1}{\gamma_1})},
 \end{equation*}
with $T_{h+1}(\frac{1}{\gamma_1})=\frac{1}{\sqrt{\epsilon}}$.
Moreover, $b_{h+1}(x)$ is determined by $|b_{h+1}(x)|^2+|c_{h+1}(x)|^2=1$. Therefore, the state after $A(\psi_{h+1})...A(\psi_1)$ applied to the initial state is
\begin{equation*}
\frac{1}{\sqrt{2}}
    \left(
    \begin{array}{c}
        0\\
        b_{h+1}(x) \\
        c_{h+1}(x) \\
        1 \\
    \end{array}
    \right).
\end{equation*}

\noindent {\bf Stage $\textcircled 2$: apply $R(\alpha_1)S$ to the above state.} After that,
 the state is
\begin{equation*}
\frac{1}{\sqrt{2}}
    \left(
    \begin{array}{c}
        b_{h+1}(x)\\
         0\\
        1 \\
        c_{h+1}(x) \\
    \end{array}
    \right).
\end{equation*}

\noindent {\bf Stage $\textcircled 3$: perform $A(\phi_{h-1})...A(\phi_1)$.} Let
\begin{equation*}
  \left(
    \begin{array}{c}
        \bar{a}_k\\
        \bar{b}_k \\
        \bar{c}_k\\
        \bar{d}_k\\
    \end{array}
    \right)
    =A(\phi_k)...A(\phi_1)
    \left(
    \begin{array}{c}
        b_{h+1}(x)\\
         0\\
        1 \\
        c_{h+1}(x) \\
    \end{array}
    \right)
\end{equation*}
for $k=1,2,\dots,h-1.$
By the property of  matrix $A(\phi_{i})$, we have $\bar{a}_k=b_{h+1}(x)$
and $\bar{d}_k=c_{h+1}(x)$ for all $k$.
The recurrence formula of $\bar{c}_k(x)$ is defined by $\bar{c}_0(x)=1,\bar{c}_1(x)=x$ and for $k=2,...,h-1$,
\begin{equation*}
    \bar{c}_k(x)=x(1+e^{-i(\phi_k-\phi_{k-1})})\bar{c}_{k-1}(x)-e^{-i(\phi_k-\phi_{k-1})}\bar{c}_{k-2}(x)
\end{equation*}
with $x=\cos(\frac{\omega}{2})$. By Lemma \ref{lemma1}, when
\begin{equation}\label{oneSideEvenA5}
    \phi_{k+1}-\phi_{k}=(-1)^k\pi-2\arccot(\tan(\frac{k\pi}{h-1})\sqrt{1-\gamma_2^{2}})
 \end{equation}
 with $\gamma^{-1}_2=\cos(\frac{1}{h-1}\arccos(\frac{1}{\sqrt{\epsilon}}))$
 for $k=1,2,\dots,h-2,$
 we have
\begin{equation*}
 \bar{c}_{h-1}(x)=\frac{T_{h-1}(\frac{x}{\gamma_2})}{T_{h-1}(\frac{1}{\gamma_2})},
 \end{equation*}
with $T_{h-1}(\frac{1}{\gamma_2})=\frac{1}{\sqrt{\epsilon}}$.
Moreover, $\bar{b}_{h-1}(x)$ is determined by $|\bar{b}_{h-1}(x)|^2+|\bar{c}_{h-1}(x)|^2=1$. Hence, the result state is
\begin{equation*}
\frac{1}{\sqrt{2}}
    \left(
    \begin{array}{c}
        b_{h+1}(x)\\
         \bar{b}_{h-1}(x)\\
        \bar{c}_{h-1}(x) \\
        c_{h+1}(x) \\
    \end{array}
    \right).
\end{equation*}

\noindent {\bf Stage $\textcircled 4$: perform  the final operation $R(\beta_h)$.}
The  final state is
\begin{equation*}
\frac{1}{\sqrt{2}}
    \left(
    \begin{array}{c}
        b_{h+1}(x)\\
        e^{i\beta_h} \bar{b}_{h-1}(x)\\
        \bar{c}_{h-1}(x) \\
        c_{h+1}(x) \\
    \end{array}
    \right).
\end{equation*}

Therefore, the success probability $P_{h}$ is
\begin{align*}
  P_{h} &= 1- \frac{1}{2}(|\bar{c}_{h-1}(x)|^2+|c_{h-1}(x)|^2)\\
   & =1-\frac{\epsilon}{2}(T^2_{h+1}(\frac{x}{\gamma_1})+T^2_{h-1}(\frac{x}{\gamma_2}))\\
 &   =1-\frac{\epsilon}{2}\bigg[T_{h+1}^2(\cos(\frac{1}{h+1}\arccos(\frac{1}{\sqrt{\epsilon}}))\sqrt{1-\frac{n_l}{N_l}})+T_{h-1}^2(\cos(\frac{1}{h-1}\arccos(\frac{1}{\sqrt{\epsilon}}))\sqrt{1-\frac{n_l}{N_l}})\bigg].
\end{align*}

By Eq.~\eqref{oneSideEvenA1} and Eqs.~\eqref{oneSideEvenA2}-\eqref{oneSideEvenA5}, $\alpha_k, \beta_k$ can be chosen such that
\begin{equation*}
    \alpha_k=-\beta_{h+1-k}=\pi -(\phi_k-\phi_{k-1})\\=2\arccot(\tan(\frac{k\pi}{h+1})\sqrt{1-\gamma_1^2})
\end{equation*}
for $k=2,4,\dots,h$, and
\begin{equation*}
    \alpha_k=-\beta_{h+1-k}=\pi -(\psi_{k+1}-\psi_k)\\=2\arccot(\tan(\frac{(k-1)\pi}{h-1})\sqrt{1-\gamma_2^2})
\end{equation*}
for $k=3,5,\dots,h-1$.
In addition, $\alpha_1$ and $\beta_h$ can be any value.

\section{Proof of Lemma \ref{lemma3}: marked vertices in two sides}\label{AppendixB}
The section is devoted to the proof of Lemma \ref{lemma3}.

\begin{proof}
Vertices in Fig.~\ref{fig:1c} can be divided into four types: the marked vertices denoted by $u$ in the left and $t$ in the right, the unmarked vertices denoted by $v$ in the left and $s$ in the right.
Hence, our analysis can be simplified in  an 8-dimensional subspace defined by the following orthogonal basis $\{\ket{ut},\ket{us},\ket{tu},\ket{tv},\ket{vt},\ket{vs},\ket{su},\ket{sv}\}$:
\begin{align*}
    &\ket{ut}=\frac{1}{\sqrt{n_l}}\sum_u\ket{u}\otimes\frac{1}{\sqrt{n_r}}\sum_t\ket{t},
    \quad \ket{us}=\frac{1}{\sqrt{n_l}}\sum_u\ket{u}\otimes\frac{1}{\sqrt{N_r-n_r}}\sum_s\ket{s},\\
   & \ket{tu}=\frac{1}{\sqrt{n_r}}\sum_t\ket{t}\otimes\frac{1}{\sqrt{n_l}}\sum_u\ket{u},
     \quad \ket{tv}=\frac{1}{\sqrt{n_r}}\sum_t\ket{t}\otimes\frac{1}{\sqrt{N_l-n_l}}\sum_v\ket{v},\\
    &  \ket{vt}=\frac{1}{\sqrt{N_l-n_l}}\sum_v\ket{v}\otimes\frac{1}{\sqrt{n_r}}\sum_t\ket{t},\\
     &\ket{vs}=\frac{1}{\sqrt{N_l-n_l}}\sum_v\ket{v}\otimes\frac{1}{\sqrt{N_r-n_r}}\sum_s\ket{s},\\
   & \ket{su}=\frac{1}{\sqrt{N_r-n_r}}\sum_s\ket{s}\otimes\frac{1}{\sqrt{n_l}}\sum_u\ket{u},\\
   & \ket{sv}=\frac{1}{\sqrt{N_r-n_r}}\sum_s\ket{s}\otimes\frac{1}{\sqrt{N_l-n_l}}\sum_v\ket{v}.
\end{align*}

Then $\ket{\Psi_0}$ can be rewritten in the above basis as
\begin{align*}
\ket{\Psi_0}
&=\frac{1}{\sqrt{2N_lN_r}}\Big[\sqrt{n_ln_r}\ket{ut}+\sqrt{n_l(N_r-n_r)}\ket{us}+
\sqrt{n_ln_r}\ket{tu}\\
&+\sqrt{n_r(N_l-n_l)}\ket{tv}+\sqrt{n_r(N_l-n_l)}\ket{vt}+\sqrt{(N_l-n_l)(N_r-n_r)}\ket{vs}\\
&+\sqrt{n_l(N_r-n_r)}\ket{su}+\sqrt{(N_l-n_l)(N_r-n_r)}\ket{sv} \Big].
\end{align*}
Thus, $\ket{\Psi_0}$  can be expressed as an $8$-dimensional vector
$$
\ket{\Psi_0}= \frac{1}{\sqrt{2N_lN_r}}
\begin{pmatrix}
    \begin{smallmatrix}
        \sqrt{n_ln_r} \\
        \sqrt{n_l(N_r-n_r)} \\
        \sqrt{n_ln_r} \\
        \sqrt{n_r(N_l-n_l)} \\
        \sqrt{n_r(N_l-n_l)} \\
        \sqrt{(N_l-n_l)(N_r-n_r)} \\
        \sqrt{n_l(N_r-n_r)}\\
        \sqrt{(N_l-n_l)(N_r-n_r)} \\
   \end{smallmatrix}
    \end{pmatrix}.
$$

Furthermore, we have
\begin{align*}
    S=
     \begin{pmatrix}
    \begin{smallmatrix}
        0 & 0 & 1 & 0 & 0 & 0 & 0 & 0\\
        0 & 0 & 0 & 0 & 0 & 0 & 1 & 0\\
        1 & 0 & 0 & 0 & 0 & 0 & 0 & 0\\
        0 & 0 & 0 & 0 & 1 & 0 & 0 & 0\\
        0 & 0 & 0 & 1 & 0 & 0 & 0 & 0\\
        0 & 0 & 0 & 0 & 0 & 0 & 0 & 1\\
        0 & 1 & 0 & 0 & 0 & 0 & 0 & 0\\
        0 & 0 & 0 & 0 & 0 & 1 & 0 & 0\\
     \end{smallmatrix}
    \end{pmatrix},
    Q(\beta)=\begin{pmatrix}
    \begin{smallmatrix}
        e^{i\beta} & 0 & 0 & 0 & 0 & 0 & 0 & 0\\
        0 & e^{i\beta} & 0 & 0 & 0 & 0 & 0 & 0\\
        0 & 0 & e^{i\beta} & 0 & 0 & 0 & 0 & 0\\
        0 & 0 & 0 & e^{i\beta} & 0 & 0 & 0 & 0\\
        0 & 0 & 0 & 0 & 1 & 0 & 0 & 0 \\
        0 & 0 & 0 & 0 & 0 & 1 & 0 & 0 \\
        0 & 0 & 0 & 0 & 0 & 0 & 1 & 0 \\
        0 & 0 & 0 & 0 & 0 & 0 & 0 & 1 \\
     \end{smallmatrix}
    \end{pmatrix},
\end{align*}
and
\begin{small}
\begin{equation*}
    C(\alpha)=\left(
    \begin{array}{cc}
       1 & 0 \\
       0 & 1
    \end{array}
    \right)
    \otimes
    \left(
    \begin{array}{cccc}
       M_1& M_2 & 0 & 0\\
        M_2 & M_3 & 0 & 0\\
        0 & 0 &  M_4 & M_5\\
        0 & 0 & M_5 & M_6
    \end{array}
    \right),
\end{equation*}
\end{small}
with
\begin{align*}
& M_1=\frac{(1-e^{-i\alpha})(1-\cos(\omega_2))}{2}-1 ,  ~~~~~ M_2=\frac{(1-e^{-i\alpha})\sin(\omega_2)}{2},\\
&  M_3=\frac{(1-e^{-i\alpha})(1+\cos(\omega_2))}{2}-1, ~~~~~
 M_4=\frac{(1-e^{-i\alpha})(1-\cos(\omega_1))}{2}-1,  \\ & M_5=\frac{(1-e^{-i\alpha})\sin(\omega_1)}{2},  ~~~~~~~~~~~~~~~~~~
   M_6=\frac{(1-e^{-i\alpha})(1+\cos(\omega_1))}{2}-1,
\end{align*}
where $\cos(\omega_1)=1-\frac{2n_l}{N_l}$,  $\sin(\omega_1)=\frac{2}{N_l}\sqrt{n_l(N_l-n_l)}$,  $\cos(\omega_2)=1-\frac{2n_r}{N_r}$, and $\sin(\omega_2)=\frac{2}{N_r}\sqrt{n_r(N_r-n_r)}$.

Let
\begin{equation*}
    R(\theta)=-\begin{pmatrix}
    \begin{smallmatrix}
        e^{\frac{i\theta}{2}} & 0 & 0 & 0 & 0 & 0 & 0 & 0\\
        0 & e^{-\frac{i\theta}{2}} & 0 & 0 & 0 & 0 & 0 & 0\\
        0 & 0 & e^{\frac{i\theta}{2}} & 0 & 0 & 0 & 0 & 0\\
        0 & 0 & 0 & e^{-\frac{i\theta}{2}} & 0 & 0 & 0 & 0\\
        0 & 0 & 0 & 0 & e^{\frac{i\theta}{2}} & 0 & 0 & 0 \\
        0 & 0 & 0 & 0 & 0 & e^{-\frac{i\theta}{2}} & 0 & 0 \\
        0 & 0 & 0 & 0 & 0 & 0 & e^{\frac{i\theta}{2}} & 0 \\
        0 & 0 & 0 & 0 & 0 & 0 & 0 & e^{-\frac{i\theta}{2}} \\
    \end{smallmatrix}
    \end{pmatrix},
\end{equation*}
and
\begin{equation*}
    A(\theta)=\left(
    \begin{array}{cc}
       1 & 0 \\
       0 & 1
    \end{array}
    \right)
    \otimes
    \begin{pmatrix}
    \begin{smallmatrix}
        \cos(\frac{\omega_2}{2}) & -ie^{i\theta}\frac{\sin(\omega_2)}{2} & 0 & 0\\
        -ie^{-i\theta}\frac{\sin(\omega_2)}{2} & \cos(\frac{\omega_2}{2}) & 0 & 0\\
        0 & 0 &  \cos(\frac{\omega_1}{2}) & -ie^{i\theta}\frac{\sin(\omega_1)}{2}\\
        0 & 0 & -ie^{-i\theta}\frac{\sin(\omega_1)}{2} & \cos(\frac{\omega_1}{2})
    \end{smallmatrix}
    \end{pmatrix}.
\end{equation*}
We have
\begin{align}
  &C(\alpha)=e^{-\frac{i\alpha}{2}}A(\frac{\pi}{2})R(\alpha)A(-\frac{\pi}{2}) \label{C11},
  \\&Q(\beta)S=-e^{i\frac{\beta}{2}}SR(\beta) \label{C12},
  \\ &A(\alpha+\beta)=R(\beta)A(\alpha)R(-\beta) \label{C13},
  \\&R(\theta)R(-\theta)=I \label{C14},\\
 &\ket{\Psi_0} =A(\frac{\pi}{2})SA(\frac{\pi}{2})\ket{\bar{0}} \label{C15},
\end{align}
where $\ket{\bar{0}}$ denotes $ (0,0,0,0,0,\frac{1}{\sqrt{2}},0,\frac{1}{\sqrt{2}})^T$.
Similarly to Eq.~\eqref{SABSCDS=CDSAB}, the following equation holds.
\begin{equation}\label{C16}
\begin{split}
    SB_1SB_2S=B_2SB_1,
\end{split}
\end{equation}
where $B_1=\prod_{i=0}^n D_i$ and $B_2=\prod_{i=0}^m D_i$ for $D_i \in \{A(\theta_i),R(\theta_i)\}$.

Below we will prove Lemma \ref{lemma3} by two cases.

\end{proof}

\subsection{Case 1: $h$ is an odd integer}\label{subsection5.1}

First we set
\begin{equation}\label{twoSideOddA1}
    \alpha_k=
    \begin{cases}
    -\beta_{h+2-k}& \text{ $k=2,4,\dots,h-1$},\\
-\beta_{h-k}& \text{ $k=3,5,\dots,h$}.
    \end{cases}
\end{equation}
Then $\ket{\Psi_h}$ reduces  to
\begin{align}\label{twoSideOddReduce}
    \ket{\Psi_h}\sim
     S\big[A(\eta_h)...A(\eta_1)\big]R(\alpha_1)
   SR(\beta_h)\big[A(\zeta_h)...A(\zeta_1)\big]\ket{\bar{0}},
\end{align}
which will be proven in Appendix \ref{AppendixD} by using Eqs.~\eqref{C11}-\eqref{C16}. Here $\eta_{k}=\eta_{h+1-k}$, $\zeta_{k}=\zeta_{h+1-k}$ for $k=1,2,\dots,h$,
and
\begin{equation}\label{twoSetOddEtaAlpha}
    \eta_{k+1}-\eta_{k}=
    \begin{cases}
    \pi - \alpha_{k+1} & \text{ $k=2,4,\dots,h-1$},\\
    -\pi+\alpha_{h-k+1} & \text{ $k=3,5,\dots,h$},
    \end{cases}
\end{equation}
\begin{equation}\label{twoSetOddZetaAlpha}
    \zeta_{k+1}-\zeta_{k}=
    \begin{cases}
    \pi - \alpha_{k} & \text{ $k=2,4,\dots,h-1$},\\
    -\pi+\alpha_{h-k} & \text{ $k=3,5,\dots,h$}.
    \end{cases}
\end{equation}
The final state of Eq.~\eqref{twoSideOddReduce} can be obtained by the following four stages:
\begin{align*}
     \frac{1}{\sqrt{2}} \begin{pmatrix}
    \begin{smallmatrix}
        0 \\
        0 \\
        0 \\
        0 \\
        0 \\
       1 \\
        0 \\
        1
   \end{smallmatrix}
    \end{pmatrix}
&\xrightarrow [\textcircled 1] {\text{\small $A(\zeta_h)...A(\zeta_1)$}}
 \frac{1}{\sqrt{2}}
 \begin{pmatrix}
    \begin{smallmatrix}
        0 \\
        0 \\
        0 \\
        0 \\
        e_h(x_2)\\
      f_h(x_2) \\
        g_h(x_1) \\
        l_h(x_1)
     \end{smallmatrix}
    \end{pmatrix}
\xrightarrow [\textcircled 2] {R(\alpha_1)SR(\beta_h)}
  \frac{1}{\sqrt{2}} \begin{pmatrix}
    \begin{smallmatrix}
        0 \\
        e^{i\beta_h}g_h(x_1) \\
        0 \\
        e^{i\beta_h}e_h(x_2) \\
        0 \\
        g_h(x_1) \\
        0 \\
        f_h(x_2)
   \end{smallmatrix}
    \end{pmatrix}
\\
&\xrightarrow [\textcircled 3]{A(\eta_h)...A(\eta_1)}
\frac{1}{\sqrt{2}} \begin{pmatrix}
    \begin{smallmatrix}
        e^{i\beta_h}g_h(x_1)\bar{a}_h(x_2) \\
        e^{i\beta_h}g_h(x_1)\bar{b}_h(x_2) \\
        e^{i\beta_h}e_h(x_2)\bar{c}_h(x_1) \\
        e^{i\beta_h}e_h(x_2)\bar{d}_h(x_1) \\
        g_h(x_1)\bar{e}_h(x_2) \\
        g_h(x_1)\bar{f}_h(x_2) \\
        f_h(x_2)\bar{g}_h(x_1) \\
        f_h(x_2)\bar{l}_h(x_1)
   \end{smallmatrix}
    \end{pmatrix}\\
 & \xrightarrow [\textcircled 4]{\quad\quad S \quad\quad}
\frac{1}{\sqrt{2}}
\begin{pmatrix}
    \begin{smallmatrix}
         e^{i\beta_h}e_h(x_2)\bar{c}_h(x_1)\\
        f_h(x_2)\bar{g}_h(x_1) \\
        e^{i\beta_h}g_h(x_1)\bar{a}_h(x_2) \\
         g_h(x_1)\bar{e}_h(x_2)\\
       e^{i\beta_h}e_h(x_2)\bar{d}_h(x_1)  \\
       f_h(x_2)\bar{l}_h(x_1) \\
        e^{i\beta_h}g_h(x_1)\bar{b}_h(x_2) \\
         g_h(x_1)\bar{f}_h(x_2)
   \end{smallmatrix}
    \end{pmatrix}.
\end{align*}

\noindent {\bf Stage $\textcircled 1$:  apply $A(\zeta_{h})...A(\zeta_1)$  to the initial state.} Let $$(a_0, b_0, c_0, d_0, e_0, f_0, g_0, l_0)=(0,0,0,0,0,1,0,1)$$ and
\begin{equation*}
   \begin{pmatrix}
    \begin{smallmatrix}
        a_k \\
        b_k \\
        c_k \\
        d_k \\
        e_k \\
      f_k \\
       g_k \\
        l_k
     \end{smallmatrix}
    \end{pmatrix}
    =A(\zeta_{k})...A(\zeta_1)
   \begin{pmatrix}
    \begin{smallmatrix}
        0 \\
        0 \\
        0 \\
        0 \\
        0 \\
      1\\
        0 \\
        1
     \end{smallmatrix}
    \end{pmatrix}
\end{equation*}
for $k=1,2,\dots,h.$
The elements of a $8$-dimensional vector are divided into four groups: (1th, 2th), (3th, 4th), (5th, 6th), and (7th, 8th). Note that each block of  matrix $A(\zeta_{i})$ acts on a corresponding group of a $8$-dimensional vector. Thus, $a_k=b_k=c_k=d_k=0$ for all $k.$
The recurrence formula of $l_k(x_1)$ is defined by $l_0(x_1)=1,l_1(x_1)=x_1$ and for $k=2,...,h$
\begin{equation*}
    l_k(x_1)=x_1(1+e^{-i(\zeta_k-\zeta_{k-1})})l_{k-1}(x_1)-e^{-i(\zeta_k-\zeta_{k-1})}l_{k-2}(x_1)
\end{equation*} with $x_1=\cos(\frac{\omega_1}{2})$.
The recurrence formula of $f_k(x_2)$ is defined by $f_0(x_2)=1,f_1(x_2)=x_2$ and for $k=2,...,h$
\begin{equation*}
    f_k(x_2)=x_2(1+e^{-i(\zeta_k-\zeta_{k-1})})f_{k-1}(x_2)-e^{-i(\zeta_k-\zeta_{k-1})}f_{k-2}(x_2)
\end{equation*} with $x_2=\cos(\frac{\omega_2}{2})$.
By Lemma \ref{lemma1}, when
\begin{equation}\label{twoSideOddA4}
  \zeta_{k+1}-\zeta_{k}=(-1)^k\pi-2\arccot(\tan(\frac{k\pi}{h})\sqrt{1-\gamma^{2}}),
\end{equation}
with $\gamma^{-1}=\cos(\frac{1}{h}\arccos(\frac{1}{\sqrt{\epsilon}}))$ for $k=1,2,\dots,h-1,$ we have
\begin{equation*}
 l_h(x_1)=\frac{T_h(\frac{x_1}{\gamma})}{T_h(\frac{1}{\gamma})},  f_h(x_2)=\frac{T_h(\frac{x_2}{\gamma})}{T_h(\frac{1}{\gamma})},
\end{equation*}
with $T_h(\frac{1}{\gamma})=\frac{1}{\sqrt{\epsilon}}$.
Moreover, $e_h(x_2)$ and $g_h(x_1)$ are determined by $|e_h(x_2)|^2+|f_h(x_2)|^2=1$ and $|g_h(x_1)|^2+|l_h(x_1)|^2=1$, respectively. Therefore, the state after $A(\zeta_{h})...A(\zeta_1)$ applied to the initial state is
\begin{equation*}
 \frac{1}{\sqrt{2}}
 \begin{pmatrix}
    \begin{smallmatrix}
        0 \\
        0 \\
        0 \\
        0 \\
        e_h(x_2)\\
      f_h(x_2) \\
        g_h(x_1) \\
        l_h(x_1)
     \end{smallmatrix}
    \end{pmatrix}.
\end{equation*}

\noindent {\bf Stage $\textcircled 2$: apply $R(\alpha_1)SR(\beta_h)$ to the above state.} After that, the state is
\begin{equation*}
    \frac{1}{\sqrt{2}} \begin{pmatrix}
    \begin{smallmatrix}
        0 \\
        e^{i\beta_h}g_h(x_1) \\
        0 \\
        e^{i\beta_h}e_h(x_2) \\
        0 \\
        g_h(x_1) \\
        0 \\
        f_h(x_2)
   \end{smallmatrix}
    \end{pmatrix}.
\end{equation*}

\noindent {\bf Stage $\textcircled 3$: perform $A(\eta_h)...A(\eta_1)$.}  Let $(\bar{a}_0, \bar{b}_0, \bar{c}_0, \bar{d}_0, \bar{e}_0, \bar{f}_0, \bar{g}_0, \bar{l}_0)=(0,1,0,1,0,1,0,1)$ and
\begin{equation*}
     \begin{pmatrix}
    \begin{smallmatrix}
        e^{i\beta_h}g_h(x_1)\bar{a}_k \\
        e^{i\beta_h}g_h(x_1)\bar{b}_k \\
        e^{i\beta_h}e_h(x_2)\bar{c}_k \\
        e^{i\beta_h}e_h(x_2)\bar{d}_k \\
        g_h(x_1)\bar{e}_k \\
        g_h(x_1)\bar{f}_k \\
        f_h(x_2)\bar{g}_k \\
        f_h(x_2)\bar{l}_k
   \end{smallmatrix}
    \end{pmatrix}
    =A(\zeta_{k})...A(\zeta_1)
  \begin{pmatrix}
    \begin{smallmatrix}
        0 \\
        e^{i\beta_h}g_h(x_1) \\
        0 \\
        e^{i\beta_h}e_h(x_2) \\
        0 \\
        g_h(x_1) \\
        0 \\
        f_h(x_2)
   \end{smallmatrix}
    \end{pmatrix}
\end{equation*}
for $k=1,...,h$.
The recurrence formula of $\bar{d}_k(x_1)$ is defined by $\bar{d}_0(x_1)=1,\bar{d}_1(x_1)=x_1$ and for $k=2,...,h$
\begin{equation*}
    \bar{d}_k(x_1)=x(1+e^{-i(\eta_k-\eta_{k-1})})\bar{d}_{k-1}(x_1)-e^{-i(\eta_k-\eta_{k-1})}\bar{d}_{k-2}(x_1),
\end{equation*}
with $x_1=\cos(\frac{\omega_1}{2})$, and
the recurrence formula of $\bar{l}_k(x_1)$ is defined by $\bar{l}_0(x_1)=1,\bar{l}_1(x_1)=x_1$ and for $k=2,...,h$
\begin{equation*}
    \bar{l}_k(x_1)=x_1(1+e^{-i(\eta_k-\eta_{k-1})})\bar{l}_{k-1}(x_1)-e^{-i(\eta_k-\eta_{k-1})}\bar{l}_{k-2}(x_1).
\end{equation*}
The recurrence formula of $\bar{b}_k(x_2)$ is defined by $\bar{b}_0(x_2)=1,\bar{b}_1(x_2)=x_2$ and for $k=2,...,h$
\begin{equation*}
    \bar{b}_k(x_2)=x_2(1+e^{-i(\eta_k-\eta_{k-1})})\bar{b}_{k-1}(x_2)-e^{-i(\eta_k-\eta_{k-1})}\bar{b}_{k-2}(x_2)
\end{equation*}
with $x_2=\cos(\frac{\omega_2}{2})$, and
the recurrence formula of $\bar{f}_k(x_2)$ is defined by $\bar{f}_0(x_2)=1,\bar{f}_1(x_2)=x_2$ and for $k=2,...,h$
\begin{equation*}
    \bar{f}_k(x_2)=x_2(1+e^{-i(\eta_k-\eta_{k-1})})\bar{f}_{k-1}(x_2)-e^{-i(\eta_k-\eta_{k-1})}\bar{f}_{k-2}(x_2).
\end{equation*}
By Lemma \ref{lemma1}, when
\begin{equation}\label{twoSideOddA4X}
  \eta_{k+1}-\eta_{k}=(-1)^k\pi-2\arccot(\tan(\frac{k\pi}{h})\sqrt{1-\gamma^{2}}),
\end{equation}
with $\gamma^{-1}=\cos(\frac{1}{h}\arccos(\frac{1}{\sqrt{\epsilon}}))$ for $k=1,2,\dots,h-1,$ we have
\begin{equation*}
 \bar{d}_h(x_1)=\bar{l}_h(x_1)=\frac{T_h(\frac{x_1}{\gamma})}{T_h(\frac{1}{\gamma})}, \quad \bar{b}_h(x_2)=\bar{f}_h(x_2)=\frac{T_h(\frac{x_2}{\gamma})}{T_h(\frac{1}{\gamma})},
\end{equation*}
with $T_h(\frac{1}{\gamma})=\frac{1}{\sqrt{\epsilon}}$.
Moreover, $\bar{a}_h(x_2)$, $\bar{c}_h(x_1)$, $\bar{e}_h(x_2)$, and $\bar{g}_h(x_1)$ are determined by $|\bar{a}_h(x_2)|^2+|\bar{b}_h(x_2)|^2=1$, $|\bar{c}_h(x_1)|^2+|\bar{c}_h(x_1)|^2=1$, $|\bar{e}_h(x_2)|^2+|\bar{e}_h(x_2)|^2=1$, and $|\bar{g}_h(x_1)|^2+|\bar{l}_h(x_1)|^2=1$, respectively.
Hence, the result state is
\begin{equation*}
  \frac{1}{\sqrt{2}} \begin{pmatrix}
    \begin{smallmatrix}
        e^{i\beta_h}g_h(x_1)\bar{a}_h(x_2) \\
        e^{i\beta_h}g_h(x_1)\bar{b}_h(x_2) \\
        e^{i\beta_h}e_h(x_2)\bar{c}_h(x_1) \\
        e^{i\beta_h}e_h(x_2)\bar{d}_h(x_1) \\
        g_h(x_1)\bar{e}_h(x_2) \\
        g_h(x_1)\bar{f}_h(x_2) \\
        f_h(x_2)\bar{g}_h(x_1) \\
        f_h(x_2)\bar{l}_h(x_1)
   \end{smallmatrix}
    \end{pmatrix}.
\end{equation*}

\noindent {\bf Stage $\textcircled 4$: perform  the final operation $S$.}
The  final state is
\begin{equation*}
  \frac{1}{\sqrt{2}} \begin{pmatrix}
    \begin{smallmatrix}
         e^{i\beta_h}e_h(x_2)\bar{c}_h(x_1)\\
        f_h(x_2)\bar{g}_h(x_1) \\
        e^{i\beta_h}g_h(x_1)\bar{a}_h(x_2) \\
         g_h(x_1)\bar{e}_h(x_2)\\
       e^{i\beta_h}e_h(x_2)\bar{d}_h(x_1)  \\
       f_h(x_2)\bar{l}_h(x_1) \\
        e^{i\beta_h}g_h(x_1)\bar{b}_h(x_2) \\
         g_h(x_1)\bar{f}_h(x_2)
   \end{smallmatrix}
    \end{pmatrix}.
\end{equation*}

Hence, the success probability $P_h$ is \begin{align*}
    P_h &=1-\frac{1}{2}|f_h(x_2)\bar{l}_h(x_1)|^2-\frac{1}{2}|g_h(x_1)\bar{f}_h(x_2)|^2
        = 1-\epsilon^2 T_h^2(\frac{x_1}{\gamma})T_h^2(\frac{x_2}{\gamma})\\ \nonumber
        &=1-\epsilon^2 T_h^2(\cos(\frac{1}{h}\arccos(\frac{1}{\sqrt{\epsilon}}))\sqrt{1-\frac{n_l}{N_l}}))T_h^2(\cos(\frac{1}{h}\arccos(\frac{1}{\sqrt{\epsilon}}))\sqrt{1-\frac{n_r}{N_r}})).
\end{align*}

By Eq.~\eqref{twoSideOddA1}  and Eqs.~\eqref{twoSetOddEtaAlpha}-\eqref{twoSideOddA4X}, $\alpha_k, \beta_k$ can be chosen such that
$$\alpha_k=-\beta_{h+2-k}=\pi+(\eta_{k+1}-\eta_{k})=2\arccot(\tan(\frac{k\pi}{h})\sqrt{1-\gamma^2})$$
for $k=2,4,\dots,h-1$, and
$$\alpha_k=-\beta_{h-k}=\pi-(\zeta_{k+1}-\zeta_{k})=\\2\arccot(\tan(\frac{(k-1)\pi}{h})\sqrt{1-\gamma^2}$$ for $k=3,5,\dots,h$. In addition, $\alpha_1$ and $\beta_h$ can be any value.

\subsection{Case 2: $h$ is an even integer}\label{subsection5.2}

First we set
\begin{equation}\label{twoSideEvenA1}
    \beta_k=
    -\alpha_{h+1-k} \text{ \quad \quad $k=1,2,\dots,h-1$}.
\end{equation}
Then $\ket{\Psi_h}$ reduces to
\begin{equation}\label{twoSideEvenReduce}
\begin{split}
    \ket{\Psi_h} \sim R(\beta_h)\big[A(\phi_{h-1})...A(\phi_1)\big]R(\alpha_1)S\big[A(\psi_{h+1})...A(\psi_1)\big]\ket{\bar{0}},
\end{split}
\end{equation}
which will be proven in Appendix \ref{AppendixD} by using Eqs.~\eqref{C11}-\eqref{C16}.
Here $\phi_{k}=\phi_{h-k}$ for $k=1,2,\dots,h-1$, $\psi_{k}=\psi_{h+2-k}$
for $k=1,2,\dots,h+1,$
and
\begin{equation}\label{twoSideEvenA2}
    \phi_{k+1}-\phi_{k}=
    \begin{cases}
    \pi - \alpha_{k+1} & \text{ $k=2,4,\dots,h-2$},\\
    -\pi+\alpha_{h-k} & \text{ $k=1,3,\dots,h-3$},
    \end{cases}
\end{equation}
\begin{equation}\label{twoSideEvenA3}
    \psi_{k+1}-\psi_{k}=
    \begin{cases}
    \pi - \alpha_{k} & \text{ $k=2,4,\dots,h-2$},\\
    -\pi+\alpha_{h-k+1} & \text{ $k=1,3,\dots,h$}.
    \end{cases}
\end{equation}

The final state of Eq.~\eqref{twoSideEvenReduce}  is given by the four stages as follows:
\begin{align*}
 \frac{1}{\sqrt{2}}
 \begin{pmatrix}
    \begin{smallmatrix}
        0 \\
        0 \\
        0 \\
        0 \\
        0 \\
       1 \\
        0 \\
        1 \\
   \end{smallmatrix}
    \end{pmatrix}
&\xrightarrow [\textcircled 1] {A(\psi_{h+1})...A(\psi_1)}
  \frac{1}{\sqrt{2}}
 \begin{pmatrix}
    \begin{smallmatrix}
        0 \\
        0 \\
        0 \\
        0 \\
        e_{h+1}(x_2)\\
      f_{h+1}(x_2) \\
        g_{h+1}(x_1) \\
        l_{h+1}(x_1)
     \end{smallmatrix}
    \end{pmatrix}
\xrightarrow [\textcircled 2] {R(\alpha_1)S}
 \frac{1}{\sqrt{2}} \begin{pmatrix}
    \begin{smallmatrix}
        0 \\
        g_{h+1}(x_1) \\
        0 \\
        e_{h+1}(x_2) \\
        0 \\
        l_{h+1}(x_1) \\
        0 \\
        f_{h+1}(x_2)
   \end{smallmatrix}
    \end{pmatrix}
\\&\xrightarrow [\textcircled 3] {A(\phi_{h-1})...A(\phi_1)}   \frac{1}{\sqrt{2}} \begin{pmatrix}
    \begin{smallmatrix}
         g_{h+1}(x_1)\bar{a}_{h-1}(x_2) \\
         g_{h+1}(x_1)\bar{b}_{h-1}(x_2)  \\
        e_{h+1}(x_2)\bar{c}_{h-1}(x_1) \\
        e_{h+1}(x_2)\bar{d}_{h-1}(x_1)  \\
        l_{h+1}(x_1)\bar{e}_{h-1}(x_2) \\
        l_{h+1}(x_1)\bar{f}_{h-1}(x_2) \\
        f_{h+1}(x_2)\bar{g}_{h-1}(x_1) \\
        f_{h+1}(x_2)\bar{l}_{h-1}(x_1)
   \end{smallmatrix}
    \end{pmatrix}
    \\
&\xrightarrow [\textcircled 4] {\quad \quad R(\beta_h) \quad \quad}  \frac{1}{\sqrt{2}} \begin{pmatrix}
    \begin{smallmatrix}
        e^{i\beta_h} g_{h+1}(x_1)\bar{a}_{h-1}(x_2) \\
         g_{h+1}(x_1)\bar{b}_{h-1}(x_2)  \\
        e^{i\beta_h}e_{h+1}(x_2)\bar{c}_{h-1}(x_1) \\
        e_{h+1}(x_2)\bar{d}_{h-1}(x_1)  \\
        e^{i\beta_h}l_{h+1}(x_1)\bar{e}_{h-1}(x_2) \\
        l_{h+1}(x_1)\bar{f}_{h-1}(x_2) \\
       e^{i\beta_h} f_{h+1}(x_2)\bar{g}_{h-1}(x_1) \\
        f_{h+1}(x_2)\bar{l}_{h-1}(x_1)
   \end{smallmatrix}
    \end{pmatrix}.
\end{align*}

\noindent {\bf Stage $\textcircled 1$:  apply $A(\psi_{h+1})...A(\psi_1)$  to the initial state.} Let $$(a_0, b_0, c_0, d_0, e_0, f_0, g_0, l_0)=(0,0,0,0,0,1,0,1)$$ and
\begin{equation*}
 \begin{pmatrix}
    \begin{smallmatrix}
        a_k \\
        b_k \\
        c_k \\
        d_k \\
        e_k \\
      f_k \\
       g_k \\
        l_k
     \end{smallmatrix}
    \end{pmatrix}
    =A(\psi_{k})...A(\psi_1)
 \begin{pmatrix}
    \begin{smallmatrix}
        0 \\
        0 \\
        0 \\
        0 \\
        0 \\
      1\\
        0 \\
        1
     \end{smallmatrix}
    \end{pmatrix}
\end{equation*}
for $k=1,2,\dots,h.$ By the property of the matrix $A(\zeta_{i})$, we have
$a_k=b_k=c_k=d_k=0$ for all $k.$ And $e_k, f_k, g_k,l_k$ can be gotten in the following.
The recurrence formula of $l_k(x_1)$ is defined by $l_0(x_1)=1,l_1(x_1)=x_1$ and for $k=2,...,h+1$
\begin{equation*}
    l_k(x_1)=x_1(1+e^{-i(\psi_k-\psi_{k-1})})l_{k-1}(x_1)-e^{-i(\psi_k-\psi_{k-1})}l_{k-2}(x_1)
\end{equation*}
with $x_1=\cos(\frac{\omega_1}{2})$, and
the recurrence formula of $f_k(x_2)$ is defined by $f_0(x_2)=1,f_1(x_2)=x_2$ and for $k=2,...,h+1$
\begin{equation*}
    f_k(x_2)=x_2(1+e^{-i(\psi_k-\psi_{k-1})})f_{k-1}(x_2)-e^{-i(\psi_k-\psi_{k-1})}f_{k-2}(x_2)
\end{equation*} with $x_2=\cos(\frac{\omega_2}{2})$.
By Lemma \ref{lemma1}, when
\begin{equation}\label{twoSideEvenA4}
    \psi_{k+1}-\psi_{k}=(-1)^k\pi-2\arccot(\tan(\frac{k\pi}{h+1})\sqrt{1-\gamma_1^{2}})
 \end{equation}
 with $\gamma^{-1}_1=\cos(\frac{1}{h+1}\arccos(\frac{1}{\sqrt{\epsilon}}))$ for $k=1,2,\dots,h,$
we have
\begin{equation*}
    l_{h+1}(x_1)=\frac{T_{h+1}(\frac{x_1}{\gamma_1})}{T_{h+1}(\frac{1}{\gamma_1})},\quad  f_{h+1}(x_2)=\frac{T_{h+1}(\frac{x_2}{\gamma_1})}{T_{h+1}(\frac{1}{\gamma_1})},
\end{equation*}
with $T_{h+1}(\frac{1}{\gamma_1})=\frac{1}{\sqrt{\epsilon}}$.
Moreover, $e_{h+1}(x_2)$ and $g_{h+1}(x_1)$ are determined by $|e_{h+1}(x_2)|^2+|f_{h+1}(x_2)|^2=1$ and $|g_{h+1}(x_1)|^2+|l_{h+1}(x_1)|^2=1$, respectively. Therefore, the state after $A(\psi_{h})...A(\psi_1)$ applied to the initial state is
\begin{equation*}
 \frac{1}{\sqrt{2}}
 \begin{pmatrix}
    \begin{smallmatrix}
        0 \\
        0 \\
        0 \\
        0 \\
        e_{h+1}(x_2)\\
      f_{h+1}(x_2) \\
        g_{h+1}(x_1) \\
        l_{h+1}(x_1)
     \end{smallmatrix}
    \end{pmatrix}.
\end{equation*}

\noindent {\bf Stage $\textcircled 2$: apply $R(\alpha_1)S$ to the above state.} After that, the state is
\begin{equation*}
    \frac{1}{\sqrt{2}} \begin{pmatrix}
    \begin{smallmatrix}
        0 \\
        g_{h+1}(x_1) \\
        0 \\
        e_{h+1}(x_2) \\
        0 \\
        l_{h+1}(x_1) \\
        0 \\
        f_{h+1}(x_2)
   \end{smallmatrix}
    \end{pmatrix}.
\end{equation*}

\noindent {\bf Stage $\textcircled 3$: perform $A(\phi_{h-1})...A(\phi_1)$.}  Let $(\bar{a}_0, \bar{b}_0, \bar{c}_0, \bar{d}_0, \bar{e}_0, \bar{f}_0, \bar{g}_0, \bar{l}_0)=(0,1,0,1,0,1,0,1)$ and
\begin{equation*}
    \frac{1}{\sqrt{2}} \begin{pmatrix}
    \begin{smallmatrix}
         g_{h+1}(x_1)\bar{a}_k \\
         g_{h+1}(x_1)\bar{b}_k \\
        e_{h+1}(x_2)\bar{c}_k \\
        e_{h+1}(x_2)\bar{d}_k \\
        l_{h+1}(x_1)\bar{e}_k \\
        l_{h+1}(x_1)\bar{f}_k \\
        f_{h+1}(x_2)\bar{g}_k \\
        f_{h+1}(x_2)\bar{l}_k
   \end{smallmatrix}
    \end{pmatrix}
    =A(\phi_{k})...A(\phi_1)
   \frac{1}{\sqrt{2}} \begin{pmatrix}
    \begin{smallmatrix}
        0 \\
        g_{h+1}(x_1) \\
        0 \\
        e_{h+1}(x_2) \\
        0 \\
        l_{h+1}(x_1) \\
        0 \\
        f_{h+1}(x_2)
   \end{smallmatrix}
    \end{pmatrix}
\end{equation*}
for $k=1,...,h-1$.
The recurrence formula of $\bar{d}_k(x_1)$ is defined by $\bar{d}_0(x_1)=1,\bar{d}_1(x_1)=x_1$ and for $k=2,...,h-1$
\begin{equation*}
    \bar{d}_k(x_1)=x(1+e^{-i(\phi_k-\phi_{k-1})})\bar{d}_{k-1}(x_1)-e^{-i(\phi_k-\phi_{k-1})}\bar{d}_{k-2}(x_1)
\end{equation*} with $x_1=\cos(\frac{\omega_1}{2})$,
and
the recurrence formula of $\bar{l}_k(x_1)$ is defined by $\bar{l}_0(x_1)=1,\bar{l}_1(x_1)=x_1$ and for $k=2,...,h-1$
\begin{equation*}
    \bar{l}_k(x_1)=x_1(1+e^{-i(\phi_k-\phi_{k-1})})\bar{l}_{k-1}(x_1)-e^{-i(\phi_k-\phi_{k-1})}\bar{l}_{k-2}(x_1).
\end{equation*}

The recurrence formula of $\bar{b}_k(x_2)$ is defined by $\bar{b}_0(x_2)=1,\bar{b}_1(x_2)=x_2$ and for $k=2,...,h-1$
\begin{equation*}
    \bar{b}_k(x_2)=x_2(1+e^{-i(\phi_k-\phi_{k-1})})\bar{b}_{k-1}(x_2)-e^{-i(\phi_k-\phi_{k-1})}\bar{b}_{k-2}(x_2)
\end{equation*}
 with  $x_2=\cos(\frac{\omega_2}{2})$, and
the recurrence formula of $\bar{f}_k(x_2)$ is defined by $\bar{f}_0(x_2)=1,\bar{f}_1(x_2)=x_2$ and for $k=2,...,h-1$
\begin{equation*}
    \bar{f}_k(x_2)=x_2(1+e^{-i(\phi_k-\phi_{k-1})})\bar{f}_{k-1}(x_2)-e^{-i(\phi_k-\phi_{k-1})}\bar{f}_{k-2}(x_2).
\end{equation*}
By Lemma \ref{lemma1},
when
\begin{equation}\label{twoSideEvenA5}
    \phi_{k+1}-\phi_{k}=(-1)^k\pi-2\arccot(\tan(\frac{k\pi}{h-1})\sqrt{1-\gamma_2^{2}})
 \end{equation}
 with $\gamma^{-1}_2=\cos(\frac{1}{h-1}\arccos(\frac{1}{\sqrt{\epsilon}}))$
 for $k=1,2,\dots,h-2,$
we have
\begin{equation*}
 \bar{d}_{h-1}(x_1)=\bar{l}_{h-1}(x_1)=\frac{T_{h-1}(\frac{x_1}{\gamma_2})}{T_{h-1}(\frac{1}{\gamma_2})}, \quad \bar{b}_{h-1}(x_2)=\bar{f}_{h-1}(x_2)=\frac{T_{h-1}(\frac{x_2}{\gamma_2})}{T_{h-1}(\frac{1}{\gamma_2})},
\end{equation*}
with $T_{h-1}(\frac{1}{\gamma_2})=\frac{1}{\sqrt{\epsilon}}$.
Moreover, $\bar{a}_h(x_2)$, $\bar{c}_h(x_1)$, $\bar{e}_h(x_2)$, and $\bar{g}_h(x_1)$ are determined by $|\bar{a}_h(x_2)|^2+|\bar{b}_h(x_2)|^2=1$, $|\bar{c}_h(x_1)|^2+|\bar{d}_h(x_1)|^2=1$, $|\bar{e}_h(x_2)|^2+|\bar{f}_h(x_2)|^2=1$, and $|\bar{g}_h(x_1)|^2+|\bar{l}_h(x_1)|^2=1$, respectively.
Hence, the result state is
\begin{equation*}
    \frac{1}{\sqrt{2}} \begin{pmatrix}
    \begin{smallmatrix}
         g_{h+1}(x_1)\bar{a}_{h-1}(x_2) \\
         g_{h+1}(x_1)\bar{b}_{h-1}(x_2)  \\
        e_{h+1}(x_2)\bar{c}_{h-1}(x_1) \\
        e_{h+1}(x_2)\bar{d}_{h-1}(x_1)  \\
        l_{h+1}(x_1)\bar{e}_{h-1}(x_2) \\
        l_{h+1}(x_1)\bar{f}_{h-1}(x_2) \\
        f_{h+1}(x_2)\bar{g}_{h-1}(x_1) \\
        f_{h+1}(x_2)\bar{l}_{h-1}(x_1)
   \end{smallmatrix}
    \end{pmatrix}.
\end{equation*}

\noindent {\bf Stage $\textcircled 4$: perform  the final operation $R(\beta_h)$.}
The  final state is
\begin{equation*}
    \frac{1}{\sqrt{2}} \begin{pmatrix}
    \begin{smallmatrix}
        e^{i\beta_h} g_{h+1}(x_1)\bar{a}_{h-1}(x_2) \\
         g_{h+1}(x_1)\bar{b}_{h-1}(x_2)  \\
        e^{i\beta_h}e_{h+1}(x_2)\bar{c}_{h-1}(x_1) \\
        e_{h+1}(x_2)\bar{d}_{h-1}(x_1)  \\
        e^{i\beta_h}l_{h+1}(x_1)\bar{e}_{h-1}(x_2) \\
        l_{h+1}(x_1)\bar{f}_{h-1}(x_2) \\
       e^{i\beta_h} f_{h+1}(x_2)\bar{g}_{h-1}(x_1) \\
        f_{h+1}(x_2)\bar{l}_{h-1}(x_1)
   \end{smallmatrix}
    \end{pmatrix}.
\end{equation*}

Therefore, the success probability $P_{h}$ is
\begin{align*}
    P_h &= 1-\frac{1}{2}|c_{h+1}(x_1)d_{h-1}(x_2)|^2-\frac{1}{2}|c_{h+1}(x_2)d_{h-1}(x_1)|^2\\
    &=1-\frac{\epsilon^2}{2}( T^2_{h+1}(\frac{x_1}{\gamma_1}) T^2_{h-1}(\frac{x_2}{\gamma_2})
    + T^2_{h+1}(\frac{x_2}{\gamma_1}) T^2_{h-1}(\frac{x_1}{\gamma_2}))\\
    &=1-\frac{\epsilon^2}{2}\bigg[ T_{h+1}^2\left(\cos\left(\frac{1}{h+1}\arccos\left(\frac{1}{\sqrt{\epsilon}}\right)\right)\sqrt{1-\frac{n_l}{N_l}}\right) \cdot
   \\&~~~~ T_{h-1}^2\left(\cos\left(\frac{1}{h-1} \arccos\left(\frac{1}{\sqrt{\epsilon}}\right)\right)\sqrt{1-\frac{n_r}{N_r}}\right)+
    \\&~~~~ T_{h+1}^2\left(\cos(\frac{1}{h+1}\arccos(\frac{1}{\sqrt{\epsilon}}))\sqrt{1-\frac{n_r}{N_r}}\right)\cdot
    \\
    &~~~~ T_{h-1}^2\left(\cos(\frac{1}{h-1}\arccos(\frac{1}{\sqrt{\epsilon}}))\sqrt{1-\frac{n_l}{N_l}}\right)
    \bigg]
\end{align*}

By Eq.~\eqref{twoSideEvenA1} and Eqs.~\eqref{twoSideEvenA2}-\eqref{twoSideEvenA5}, $\alpha_k, \beta_k$ can be chosen such that
\begin{equation*}
    \alpha_k=-\beta_{h+1-k}=\pi -(\phi_k-\phi_{k-1})\\=2\arccot(\tan(\frac{k\pi}{h+1})\sqrt{1-\gamma_2^2})
\end{equation*}
for $k=2,4,\dots,h$, and
\begin{equation*}
    \alpha_k=-\beta_{h+1-k}=\pi -(\psi_{k+1}-\psi_k)\\=2\arccot(\tan(\frac{(k-1)\pi}{h-1})\sqrt{1-\gamma_1^2})
\end{equation*}
for $k=3,5,\dots,h-1$.
In addition, $\alpha_1$ and $\beta_h$ can be any value.

\section{Proof of Equations (\ref{oneSideOddreduced}) and (\ref{oneSideEvenreduced})}\label{AppendixC}
In this appendix, we give the detailed proof of Equations~\eqref{oneSideOddreduced} and (\ref{oneSideEvenreduced}).

Recall that
\begin{equation*}
    \ket{\Psi_h}=U(\alpha_h,\beta_h) \dots U(\alpha_1,\beta_1)\ket{\Psi_0}\\=SC(\alpha_h)Q(\beta_h)\dots SC(\alpha_1)Q(\beta_1)\ket{\Psi_0}.
\end{equation*}
Using $SS=I$, we have
\begin{equation*}
   \ket{\Psi_h}=SC(\alpha_h)Q(\beta_h)\dots SC(\alpha_1)Q(\beta_1)SS\ket{\Psi_0}.
\end{equation*}
Using Eq.~(\ref{C=ARA}) and Eq.~(\ref{QS=SR}), we have
\begin{equation*}
\begin{split}
   \ket{\Psi_h}\sim SA(\frac{\pi}{2})R(\alpha_h)A(-\frac{\pi}{2})SR(\beta_h)A(\frac{\pi}{2})R(\alpha_{h-1})A(-\frac{\pi}{2})SR(\beta_{h-1})\dots\\    A(\frac{\pi}{2})R(\alpha_2)A(-\frac{\pi}{2})SR(\beta_2)A(\frac{\pi}{2})R(\alpha_1)A(-\frac{\pi}{2})SR(\beta_1)S\ket{\Psi_0}.
\end{split}
\end{equation*}
The number of $S$ is $h+2$. Two cases need to be considered.

\textbf{Case 1: h is odd.}
By Eq. (\ref{SABSCDS=CDSAB}), we have,

\begin{align*}
    \ket{\Psi_h}\sim
    &SA(\frac{\pi}{2})R(\alpha_h)A(-\frac{\pi}{2})\dots SR(\beta_{\frac{h+5}{2}})A(\frac{\pi}{2})R(\alpha_{\frac{h+3}{2}})A(-\frac{\pi}{2})\\ \nonumber
   &\Big\{ SR(\beta_{\frac{h+3}{2}})
    A(\frac{\pi}{2})R(\alpha_{\frac{h+1}{2}})A(-\frac{\pi}{2})SR(\beta_{\frac{h+1}{2}})A(\frac{\pi}{2})R(\alpha_{\frac{h-1}{2}})A(-\frac{\pi}{2})
    S \Big\}
    \\ \nonumber &
    R(\beta_{\frac{h-1}{2}})
    A(\frac{\pi}{2})R(\alpha_{\frac{h-3}{2}})A(-\frac{\pi}{2}) \dots SR(\beta_2)A(\frac{\pi}{2})R(\alpha_1)A(-\frac{\pi}{2})SR(\beta_1)S\ket{\Psi_0}\\\nonumber
    =  &SA(\frac{\pi}{2})R(\alpha_h)A(-\frac{\pi}{2})\dots SR(\beta_{\frac{h+5}{2}})A(\frac{\pi}{2})R(\alpha_{\frac{h+3}{2}})A(-\frac{\pi}{2})\\ \nonumber
   &\Big\{ R(\beta_{\frac{h+1}{2}})A(\frac{\pi}{2})R(\alpha_{\frac{h-1}{2}})A(-\frac{\pi}{2})SR(\beta_{\frac{h+3}{2}})
    A(\frac{\pi}{2})R(\alpha_{\frac{h+1}{2}})A(-\frac{\pi}{2})
     \Big\}
    \\ \nonumber &
    R(\beta_{\frac{h-1}{2}})
    A(\frac{\pi}{2})R(\alpha_{\frac{h-3}{2}})A(-\frac{\pi}{2}) \dots SR(\beta_2)A(\frac{\pi}{2})R(\alpha_1)A(-\frac{\pi}{2})SR(\beta_1)S\ket{\Psi_0}\\\nonumber
    =&SA(\frac{\pi}{2})R(\alpha_h)A(-\frac{\pi}{2})R(\beta_{h-1})A(\frac{\pi}{2}) \dots A(\frac{\pi}{2})R(\alpha_1)A(-\frac{\pi}{2}) \\&SR(\beta_h)A(\frac{\pi}{2})R(\alpha_{h-1})A(-\frac{\pi}{2}) \dots R(\alpha_2)A(-\frac{\pi}{2})R(\beta_1)S\ket{\Psi_0}.
\end{align*}
The process is as follows. We first select the formula $$ \Big \{ SR(\beta_{\frac{h+3}{2}})
    A(\frac{\pi}{2})R(\alpha_{\frac{h+1}{2}})A(-\frac{\pi}{2})SR(\beta_{\frac{h+1}{2}})A(\frac{\pi}{2})R(\alpha_{\frac{h-1}{2}})A(-\frac{\pi}{2})
    S \Big\}$$ containing the middle $S$, and then this formula reduces to $$\Big\{ R(\beta_{\frac{h+1}{2}})A(\frac{\pi}{2})R(\alpha_{\frac{h-1}{2}})A(-\frac{\pi}{2})SR(\beta_{\frac{h+3}{2}})
    A(\frac{\pi}{2})R(\alpha_{\frac{h+1}{2}})A(-\frac{\pi}{2})
     \Big\}$$ according to  Eq. (\ref{SABSCDS=CDSAB}). The final result can be obtained after repeating the above steps. 
Using Eq. (\ref{ASA0}) and $A(-\frac{\pi}{2})A(\frac{\pi}{2})=I$, we have
\begin{align*}
     \ket{\Psi_h}\sim&
    SA(\frac{\pi}{2})R(\alpha_h)A(-\frac{\pi}{2})R(\beta_{h-1})A(\frac{\pi}{2})...A(\frac{\pi}{2})R(\alpha_1)A(-\frac{\pi}{2}) \\&SR(\beta_h)A(\frac{\pi}{2})R(\alpha_{h-1})A(-\frac{\pi}{2})...R(\alpha_2)A(-\frac{\pi}{2})R(\beta_1)SA(\frac{\pi}{2})SA(\frac{\pi}{2})\ket{\bar{0}}\\
   \sim &SA(\frac{\pi}{2})R(\alpha_h)A(-\frac{\pi}{2})R(\beta_{h-1})A(\frac{\pi}{2})...R(\beta_2)A(\frac{\pi}{2})R(\alpha_1)\\
   &SR(\beta_h)A(\frac{\pi}{2})R(\alpha_{h-1})A(-\frac{\pi}{2})...R(\alpha_2)A(-\frac{\pi}{2})R(\beta_1)A(\frac{\pi}{2})\ket{\bar{0}}.
\end{align*}

Here, two cases need to be discussed.

\textbf{Case 1.1:} When $h \bmod 4=1$, $\beta_i=\alpha_{h+2-i}$ for $i=2,4,\dots,h-1$, and $\beta_i=\alpha_{h-i}$ for $i=1,3,\dots,h-2$,
using Eqs. (\ref{A=RAR}) and (\ref{RR=I}), we have
\begin{align}\label{h141}
    \ket{\Psi_h}\sim
    &SA(\frac{\pi}{2})A(-\frac{\pi}{2}+\alpha_h)A(\frac{\pi}{2}-\alpha_3+\alpha_h)A(-\frac{\pi}{2}-\alpha_3+\alpha_h+\alpha_{h-2})\nonumber\\
    &\dots A(\frac{\pi}{2}-\alpha_3-\alpha_5\dots-\alpha_{k_1}+\alpha_{k_1+2}\dots+\alpha_h)\nonumber\\
    &\dots A(-\frac{\pi}{2}-\alpha_3+\alpha_h+\alpha_{h-2})A(\frac{\pi}{2}-\alpha_3+\alpha_h)A(-\frac{\pi}{2}+\alpha_h) A(\frac{\pi}{2})R(\alpha_1)\nonumber\\
    &SR(\beta_h)A(\frac{\pi}{2})A(-\frac{\pi}{2}+\alpha_{h-1})A(\frac{\pi}{2}-\alpha_2+\alpha_{h-1})A(-\frac{\pi}{2}-\alpha_2+\alpha_{h-1}+\alpha_{h-3})\nonumber\\
    &\dots A(\frac{\pi}{2}-\alpha_2-\alpha_4\dots-\alpha_{k_2}+\alpha_{k_2+2}\dots+\alpha_{h-1})\nonumber\\
    &\dots A(-\frac{\pi}{2}-\alpha_2+\alpha_{h-1}+\alpha_{h-3})A(\frac{\pi}{2}-\alpha_2+\alpha_{h-1})A(-\frac{\pi}{2}+\alpha_{h-1}) A(\frac{\pi}{2})\ket{\bar{0}},
\end{align}
where $k_1=(h+1)/2$, $k_2=(h-1)/2$.

\textbf{Case 1.2:} When $h\bmod 4=3$, $\beta_i=\alpha_{h+2-i}$ for $i=2,4,\dots,h-1$, and $\beta_i=\alpha_{h-i}$ for $i=1,3,\dots,h-2$,
using Eqs. (\ref{A=RAR}) and (\ref{RR=I}), we have
\begin{align}\label{h143}
     \ket{\Psi_h}\sim
    &SA(\frac{\pi}{2})A(-\frac{\pi}{2}+\alpha_h)A(\frac{\pi}{2}-\alpha_3+\alpha_h)A(-\frac{\pi}{2}-\alpha_3+\alpha_h+\alpha_{h-2})\nonumber\\
    &\dots A(-\frac{\pi}{2}-\alpha_3-\alpha_5\dots-\alpha_{k_1}+\alpha_{k_1+2}\dots+\alpha_h)\nonumber\\
    &\dots A(-\frac{\pi}{2}-\alpha_3+\alpha_h+\alpha_{h-2})A(\frac{\pi}{2}-\alpha_3+\alpha_h)A(-\frac{\pi}{2}+\alpha_h) A(\frac{\pi}{2})R(\alpha_1)\nonumber\\
    &SR(\beta_h)A(\frac{\pi}{2})A(-\frac{\pi}{2}+\alpha_{h-1})A(\frac{\pi}{2}-\alpha_2+\alpha_{h-1})A(-\frac{\pi}{2}-\alpha_2+\alpha_{h-1}+\alpha_{h-3})\nonumber
    \\&\dots A(-\frac{\pi}{2}-\alpha_2-\alpha_4\dots-\alpha_{k_2}+\alpha_{k_2+2}\dots+\alpha_{h-1})\nonumber\\
    &\dots A(-\frac{\pi}{2}-\alpha_2+\alpha_{h-1}+\alpha_{h-3})A(\frac{\pi}{2}-\alpha_2+\alpha_{h-1})A(-\frac{\pi}{2}+\alpha_{h-1}) A(\frac{\pi}{2})\ket{\bar{0}},
\end{align}
where $k_1=(h-1)/2$, $k_2=(h-3)/2$.

Thus, we can get Eq.~(\ref{oneSideOddreduced}) by rewriting
Eq.~(\ref{h141}) and Eq.~(\ref{h143}) as follows.
\begin{align*}\label{oneSideOddreducedc}
    \ket{\Psi_h}\sim
     S\big[A(\eta_h)...A(\eta_1)\big]R(\alpha_1)
   SR(\beta_h)\big[A(\zeta_h)...A(\zeta_1)\big]\ket{\bar{0}},
   \tag{\ref{oneSideOddreduced}}
\end{align*}
where $\eta_{k}=\eta_{h+1-k}$ and $\zeta_{k}=\zeta_{h+1-k}$ for $k=1,2,\dots,h$,
and
\begin{equation*}\label{oneSetOddEtaAlphaX}
    \eta_{k+1}-\eta_{k}=
    \begin{cases}
    \pi - \alpha_{k+1} & \text{ $k=2,4,\dots,h-1$},\\
    -\pi+\alpha_{h-k+1} & \text{ $k=3,5,\dots,h$},
    \end{cases}
\end{equation*}
\begin{equation*}\label{oneSetOddZetaAlphaX}
    \zeta_{k+1}-\zeta_{k}=
    \begin{cases}
    \pi - \alpha_{k} & \text{ $k=2,4,\dots,h-1$},\\
    -\pi+\alpha_{h-k} & \text{ $k=3,5,\dots,h$}.
    \end{cases}
\end{equation*}
Here, when $h\bmod 4=1$,
\begin{align*}
    \eta_{(h+1)/2}=&\frac{\pi}{2}+(\alpha_{(h+5)/2}+\alpha_{(h+9)/2}+...+\alpha_{h-2}+\alpha_{h})\\
    & -(\alpha_3+\alpha_5+...+\alpha_{(h-3)/2}+\alpha_{(h+1)/2}),
\end{align*}
\begin{align*}
\zeta_{(h+1)/2}=&\frac{\pi}{2}+(\alpha_{(h+3)/2}+\alpha_{(h+7)/2}+...+\alpha_{h-3}+\alpha_{h-1})\\&-(\alpha_2+\alpha_4+...+\alpha_{(h-5)/2}+\alpha_{(h-1)/2}),
\end{align*}
and when $h\bmod 4=3$,
\begin{align*}
    \eta_{(h+1)/2}=&-\frac{\pi}{2}+(\alpha_{(h+3)/2}+\alpha_{(h+7)/2}+...+\alpha_{h-2}+\alpha_{h})\\&-(\alpha_3+\alpha_5+...+\alpha_{(h-5)/2}+\alpha_{(h-1)/2}),
\end{align*}
\begin{align*}
\zeta_{(h+1)/2}=&-\frac{\pi}{2}+(\alpha_{(h+1)/2}+\alpha_{(h+5)/2}+...+\alpha_{h-3}+\alpha_{h-1})\\&-(\alpha_2+\alpha_4+...+\alpha_{(h-7)/2}+\alpha_{(h-3)/2}).
\end{align*}

\textbf{Case 2: h is even.} Using Eq. (\ref{SABSCDS=CDSAB}), there is
\begin{align*}
    \ket{\Psi_h}\sim
    &SA(\frac{\pi}{2})R(\alpha_h)A(-\frac{\pi}{2})...SR(\beta_{\frac{h}{2}+2})A(\frac{\pi}{2})R(\alpha_{\frac{h}{2}+1})A(-\frac{\pi}{2})\\
    &SR(\beta_{\frac{h}{2}+1})A(\frac{\pi}{2})R(\alpha_{\frac{h}{2}})A(-\frac{\pi}{2})SR(\beta_{\frac{h}{2}})A(\frac{\pi}{2})R(\alpha_{\frac{h}{2}-1})A(-\frac{\pi}{2})\\
    &SR(\beta_{\frac{h}{2}-1})A(\frac{\pi}{2})R(\alpha_{\frac{h}{2}-2})A(-\frac{\pi}{2})...SR(\beta_2)A(\frac{\pi}{2})R(\alpha_1)A(-\frac{\pi}{2})SR(\beta_1)S\ket{\Psi_0}\\
    =&SA(\frac{\pi}{2})R(\alpha_h)A(-\frac{\pi}{2})SR(\beta_h)A(\frac{\pi}{2})R(\alpha_{h-1})A(-\frac{\pi}{2})...\\
    &SR(\beta_{\frac{h}{2}+2})A(\frac{\pi}{2})R(\alpha_{\frac{h}{2}+1})A(-\frac{\pi}{2})R(\beta_{\frac{h}{2}})A(\frac{\pi}{2})R(\alpha_{\frac{h}{2}-1})A(-\frac{\pi}{2})\\
    &SR(\beta_{\frac{h}{2}+1})A(\frac{\pi}{2})R(\alpha_{\frac{h}{2}})A(-\frac{\pi}{2})R(\beta_{\frac{h}{2}-1})A(\frac{\pi}{2})R(\alpha_{\frac{h}{2}-2})A(-\frac{\pi}{2})S...\\
    &SR(\beta_2)A(\frac{\pi}{2})R(\alpha_1)A(-\frac{\pi}{2})SR(\beta_1)S\ket{\Psi_0}\\
   = &SA(\frac{\pi}{2})R(\alpha_h)A(-\frac{\pi}{2})R(\beta_{h-1})A(\frac{\pi}{2})...A(\frac{\pi}{2})R(\alpha_2)A(-\frac{\pi}{2})R(\beta_1) \\
    &SR(\beta_h)A(\frac{\pi}{2})R(\alpha_{h-1})A(-\frac{\pi}{2})...R(\alpha_1)A(-\frac{\pi}{2})\ket{\Psi_0}.
\end{align*}
Using Eq. (\ref{ASA0}) and $A(-\frac{\pi}{2})A(\frac{\pi}{2})=I$, we have
\begin{align*}
    \ket{\Psi_h}\sim
   & SA(\frac{\pi}{2})R(\alpha_h)A(-\frac{\pi}{2})R(\beta_{h-1})A(\frac{\pi}{2})...A(\frac{\pi}{2})R(\alpha_2)A(-\frac{\pi}{2})R(\beta_1) \\
   &SR(\beta_h)A(\frac{\pi}{2})R(\alpha_{h-1})A(-\frac{\pi}{2})...R(\alpha_1)A(-\frac{\pi}{2})A(\frac{\pi}{2})SA(\frac{\pi}{2})\ket{\bar{0}}\\
   \sim & R(\beta_h)A(\frac{\pi}{2})R(\alpha_{h-1})A(-\frac{\pi}{2})...A(-\frac{\pi}{2})R(\beta_2)A(\frac{\pi}{2})R(\alpha_1)S\\
  & A(\frac{\pi}{2})R(\alpha_h)A(-\frac{\pi}{2})R(\beta_{h-1})A(\frac{\pi}{2})...A(\frac{\pi}{2})R(\alpha_2)A(-\frac{\pi}{2})R(\beta_1)A(\frac{\pi}{2})\ket{\bar{0}}.
\end{align*}

Here, two cases need to be discussed.

\textbf{Case 2.1:}
When $h \bmod 4=0$, and $\beta_i=\alpha_{h+1-i}$ for $i=1,2,\dots,h-1$, using Eqs. (\ref{A=RAR}) and (\ref{RR=I}), we have
\begin{align}\label{h41}
 \ket{\Psi_h}\sim
    &R(\beta_h)A(\frac{\pi}{2})A(-\frac{\pi}{2}+\alpha_{h-1})A(\frac{\pi}{2}-\alpha_3+\alpha_{h-1})A(-\frac{\pi}{2}-\alpha_3+\alpha_{h-1}+\alpha_{h-3})\nonumber\\
    &...A(-\frac{\pi}{2}-\alpha_3-\alpha_5\dots-\alpha_{k_1}+\alpha_{k_1+2}\dots+\alpha_{h-1})...\nonumber\\
   & A(-\frac{\pi}{2}-\alpha_3+\alpha_{h-1}+\alpha_{h-3})A(\frac{\pi}{2}-\alpha_3+\alpha_{h-1})A(-\frac{\pi}{2}+\alpha_{h-1})A(\frac{\pi}{2})R(\alpha_1)S\nonumber\\
   & A(\frac{\pi}{2})A(-\frac{\pi}{2}+\alpha_h)A(\frac{\pi}{2}-\alpha_2+\alpha_h)A(-\frac{\pi}{2}-\alpha_2+\alpha_h+\alpha_{h-2})\nonumber\\
    &...A(\frac{\pi}{2}-\alpha_2-\alpha_4\dots-\alpha_{k_2}+\alpha_{k_2+2}\dots+\alpha_{h})...\nonumber\\
    &A(-\frac{\pi}{2}-\alpha_2+\alpha_h+\alpha_{h-2})A(\frac{\pi}{2}-\alpha_2+\alpha_h)A(-\frac{\pi}{2}+\alpha_h)A(\frac{\pi}{2})\ket{\bar{0}},
\end{align}
where $k_1=h/2-1$, $k_2=h/2$.

\textbf{Case 2.2:}
When $h \bmod 4=2$, and $\beta_i=\alpha_{h+1-i}$ for $i=1,2,\dots,h-1$, using Eqs. (\ref{A=RAR}) and (\ref{RR=I}), we have
\begin{align}\label{h43}
    \ket{\Psi_h}\sim
   & R(\beta_h)A(\frac{\pi}{2})A(-\frac{\pi}{2}+\alpha_{h-1})A(\frac{\pi}{2}-\alpha_3+\alpha_{h-1})A(-\frac{\pi}{2}-\alpha_3+\alpha_{h-1}+\alpha_{h-3})\nonumber\\
    &...A(\frac{\pi}{2}-\alpha_3-\alpha_5\dots-\alpha_{k_1}+\alpha_{k_1+2}\dots+\alpha_{h-1})...\nonumber\\
    &A(-\frac{\pi}{2}-\alpha_3+\alpha_{h-1}+\alpha_{h-3})A(\frac{\pi}{2}-\alpha_3+\alpha_{h-1})A(-\frac{\pi}{2}+\alpha_{h-1})A(\frac{\pi}{2})R(\alpha_1)S\nonumber\\
   & A(\frac{\pi}{2})A(-\frac{\pi}{2}+\alpha_h)A(\frac{\pi}{2}-\alpha_2+\alpha_h)A(-\frac{\pi}{2}-\alpha_2+\alpha_h+\alpha_{h-2})\nonumber\\
   & ...A(-\frac{\pi}{2}-\alpha_2-\alpha_4\dots-\alpha_{k_2}+\alpha_{k_2+2}\dots+\alpha_{h})...\nonumber\\
    &A(-\frac{\pi}{2}-\alpha_2+\alpha_h+\alpha_{h-2})A(\frac{\pi}{2}-\alpha_2+\alpha_h)A(-\frac{\pi}{2}+\alpha_h)A(\frac{\pi}{2})\ket{\bar{0}},
\end{align}
where $k_1=h/2$, $k_2=h/2-1$.

Using Eqs.~(\ref{h41}) and ~(\ref{h43})
we can get Eq.~(\ref{oneSideEvenreduced}),
\begin{align*}\label{oneSideEvenreducedc}
     \ket{\Psi_h} \sim &R(\beta_h)\big[A(\phi_{h-1})...A(\phi_1)\big]
 R(\alpha_1)S\big[A(\psi_{h+1})...A(\psi_1)\big]\ket{\bar{0}} \tag{\ref{oneSideEvenreduced}},
\end{align*}
where $\phi_{k}=\phi_{h-k}$ for $k=1,2,\dots,h-1$, $\psi_{k}=\psi_{h+2-k}$
for $k=1,2,\dots,h+1$,
and
\begin{equation*}
    \phi_{k+1}-\phi_{k}=
    \begin{cases}
    \pi - \alpha_{k+1} & \text{for $k=2,4,\dots,h-2$},\\
    -\pi+\alpha_{h-k} & \text{for $k=1,3,\dots,h-3$},
    \end{cases}
\end{equation*}
\begin{equation*}
    \psi_{k+1}-\psi_{k}=
    \begin{cases}
    \pi - \alpha_{k} & \text{for $k=2,4,\dots,h-2$},\\
    -\pi+\alpha_{h-k+1} & \text{for $k=1,3,\dots,h$}.
    \end{cases}
\end{equation*}
Here,
when $h \bmod 4=0$, there are
\begin{align*}
  &\phi_{h/2}=-\frac{\pi}{2}+(\alpha_{h/2+1}+\alpha_{h/2+3}+...+\alpha_{h-3}+\alpha_{h-1})-(\alpha_3+\alpha_5+...+\alpha_{h/2-3}+\alpha_{h/2-1}),\\
  &\psi_{h/2+1}=\frac{\pi}{2}+(\alpha_{h/2+2}+\alpha_{h/2+4}+...+\alpha_{h-2}+\alpha_{h})-(\alpha_2+\alpha_4+...+\alpha_{h/2-2}+\alpha_{h/2},
\end{align*}
and when $h\bmod 4=2$, there are
\begin{align*}
    &\phi_{h/2}=\frac{\pi}{2}+(\alpha_{h/2+2}+\alpha_{h/2+4}+...+\alpha_{h-3}+\alpha_{h-1})-(\alpha_3+\alpha_5+...+\alpha_{h/2-2}+\alpha_{h/2}),\\
    &\psi_{h/2+1}=-\frac{\pi}{2}+(\alpha_{h/2+1}+\alpha_{h/2+3}+...+\alpha_{h-2}+\alpha_{h})-(\alpha_2+\alpha_4+...+\alpha_{h/2-3}+\alpha_{h/2-1}).
\end{align*}

\section{Proof of Equations  \eqref{twoSideOddReduce} and \eqref{twoSideEvenReduce}}\label{AppendixD}
The process of proving Eq. \eqref{oneSideOddreduced} (\eqref{oneSideEvenreduced}) uses Eqs.~\eqref{C=ARA}-\eqref{SABSCDS=CDSAB} in Appendix \ref{AppendixC}.
There are the same mathematical equations in two cases (see Eqs.~\eqref{C=ARA}-\eqref{SABSCDS=CDSAB} and Eqs.~\eqref{C11}-\eqref{C16}), although their subspace dimensions in the analysis are different. Therefore, the proof of Eq. \eqref{twoSideOddReduce} (Eq. \eqref{twoSideEvenReduce}) is the same as Eq. \eqref{oneSideOddreduced} (\eqref{oneSideEvenreduced}).

\end{widetext}
\end{document}